\newtheorem{lemma}{Lemma}
\newtheorem{theorem}{Theorem}
\newtheorem{hypothesis}{Hypothesis}
\newtheorem{remark}{Remark}
\newcommand{\bee}{\begin{eqnarray}}
\newcommand{\eee}{\end{eqnarray}}
\newcommand{\be}{\begin{eqnarray*}}
\newcommand{\ee}{\end{eqnarray*}}
\newcommand{\R}{{\mathbb R}}
\newcommand{\C}{{\mathbb C}}
\begin{document}
 
 \title [Perturbation Theory for NLS]{Perturbation theory for nonlinear Schr\"odinger equations}
 
 \author {
 Andrea Sacchetti
 }

\address {
Department of Physics, Informatics and Mathematics, University of Modena and Reggio Emilia, Modena, Italy.
}

\email {andrea.sacchetti@unimore.it}

\date {\today}

\thanks {{\it Acknowledgments.} \ This work is partially supported by the GNFM-INdAM and by the UniMoRe-FIM project ``Modelli e Metodi della Fisica Matematica''. \ We deeply thank Riccardo Adami for useful discussions and the referees for their insightful comments.}

\begin {abstract} Treating the nonlinear term of the Gross-Pitaevskii nonlinear Schr\"o\-din\-ger equation as a perturbation of an isolated discrete eigenvalue of the linear problem one obtains a Rayleigh-Schr\"odinger power series. \ This power series is proved to be convergent  when the parameter representing the intensity of the nonlinear term is less in absolute value than a threshold value, and it gives a stationary solution to the nonlinear Schr\"odinger equation. 

\bigskip

{\it Data availability statement.} \ All data generated or analysed during this study are included in this published article.

\bigskip

{\it Conflict of interest statement.} \ The author has no competing interests to declare that are relevant to the content of this article.


\end{abstract}

\keywords {Nonlinear Schr\"odinger equation; perturbation theory; Rayleigh-Schr\"odinger power series}

\maketitle

\section {Introduction} \label {Sec1}

Nonlinear Schr\"odinger equation (hereafter NLS) is a research topic with a large variety of applications \cite {Su}: from problems in nonlinear optics to the analysis of quantum dynamics of Bose-Einstein condensates. \ In particular, the study of its stationary solutions has attracted increasing attention, and, apart from the few cases in which the solution exists in explicit form, the analysis has mainly focused variational methods or 
on approximation methods based on both semiclassical and perturbative techniques. 

Variational methods are widely used in order to construct bound states for NLS with a linear potential, typically by solving a minimization problem; for instance, this is done by \cite {RW} where they proved the existence of a small amplitude stationary solution that bifurcates from the zero solution (see also \cite {SW1,SW2} where nonlinear scattering is considered for NLS with, respectively, one and two nonlinear bound states and the references there in). \ Similarly, in \cite {Aschbacher} variational methods have been applied to prove that for the minimizer of the nonlinear Hartree energy functional a symmetry breaking effect occurs.

For what concerns semiclassical methods, they have been successfully used in this framework where several authors have been able to demonstrate the existence, in the semiclassical limit  using variational techniques, of stationary solution concentrated around the critical points of the potential \cite {A,Fl,O,W}. \ Also the occurrence of bifurcation phenomena has been discussed in the semiclassical limit \cite {Sa}.

On the other side, the perturbative approach takes up the underlying idea of Rayleigh-Schr\"odinger series expansion, where the solution is written as a formal series of powers whose coefficients are determined recursively and where the convergence of the series is under investigation. \ Typically in these cases the perturbation is represented by the nonlinear term, and the unperturbed Schr\"odinger equation, where the nonlinear term is absent, admits isolated eigenvalues. \ Several applications of this idea have been developed over the years \cite {An1,An2,C,F,S,V} limited, in general, to a formal analysis of the series without proving its convergence. \ In fact, we should emphasize that the problem of convergence of the power series has been solved for some kind of nonlinear Schr\"odinger equations; more precisely, the spinless real Hartree-Fock model and the Thomas-Fermi-Von-Weizs\"acker model has been considered by \cite {C} proving, in particular, that in the first model the Rayleigh-Schr\"odinger perturbation series has a positive convergence radius. 

Finally, it should be mentioned that numerical methods based on discrete Galerkin approximations or spectral splitting methods are widely and effectively used for the study of time-dependent NLS (see \cite {Antoine,Bao1,Bao2,Sacchetti,Soffer} and references therein).

In this paper we aim to give a rigorous basis to the perturbative approach for computing the stationary solution of the NLS by going so far as to demonstrate, under fairly general assumptions, the convergence of the Rayleigh-Schr\"odinger series when the perturbative parameter, which measures the intensity of the nonlinear perturbation, is less in absolute value than a given threshold. \ In this way it is shown that the steady states associated with isolated and nondegenerate eigenvalues of the linear operator transform into stationary solutions of the NLS when nonlinearity is switched on, and the latter can be computed very efficiently through the convergent perturbative series. \ Finally, it is also possible to give a lower estimate of the radius of convergence of the power series.

The paper is organized as follows. \ In Section \ref {Sec2} we describe the model, we write the formal power series of the stationary solutions and we state the convergence result in Theorem \ref {Teo1}.  \ In Section \ref {Sec5} we state and prove some technical preliminary results. \ In Section \ref {Sec6} we obtain the convergence of the perturbative series proving thus Theorem \ref {Teo1}. \ In Sections \ref {Sec3} and \ref {Sec4} we discuss a couple of one-dimensional examples: namely in Section \ref {Sec3} we consider the case of an infinite well potential, in this case we are also able to compare the perturbative results with the exact ones; in Section \ref {Sec4} we compute the perturbative series in the case where the potential is the harmonic one. \ The discussion of these two models is, in some sense, ``pedagogical''; indeed, by means of numerical experiments it is possible to see that the coefficients of the power series expansion rapidly decreases and then one can guess the convergence radius of the power series. \ Finally, in Section \ref {Sec7} we draw some closing comments. \ A small technical appendix closes the paper.

\section {Main results} \label {Sec2}

\subsection {Assumptions}

We consider the time-independent nonlinear Schr\"odinger equation
\bee
H \psi + \nu |\psi |^2 \psi = E \psi \, , \ \psi \in L^2 (\R^d)  \, , \label {Eq1}
\eee
where $H = - \Delta +V$ is a linear operator formally defined on $L^2 (\R^d)$. \ The nonlinear term plays the role of perturbation and its strength $\nu \in \C$ is a small perturbative parameter.

\begin {hypothesis} \label {Hyp1}
The potential $V$ is assumed to be a real-valued piecewise continuous function bounded from below: 
\bee
V(x) \ge \Gamma \, , \ \forall x \in \R \, , \label {Eq2}
\eee
for some $\Gamma \in \R$.
\end {hypothesis}

\begin {remark}
We assume that the potential $V(x)$ is a piecewise continuous function bounded from below for the sake of simplicity. \ In fact, we must remark  that one could extend our treatment to the case where some milder conditions on $V(x)$ are assumed; however, we don't dwell on those details here. \ On the other side, it might be interesting to consider the case in which $V(x)$ is given by means of an attractive Dirac's $\delta$ \cite {DellaCasa}; this case does not fall under the Hypothesis \ref {Hyp1}.
\end {remark}

Hence, $H$ admits a self-adjoint extension, still denoted by $H$, on a self-adjointness domain ${\mathcal D}(H) \subset L^2 (\R^d )$.

\begin {hypothesis} \label {Hyp2}
The discrete spectrum of $H$ is not empty: $\sigma_d (H) \not= \emptyset$, and admits a non degenerate eigenvalue $e_0 \in \sigma_d (H)$  with associated eigenvector $\phi_0\in {\mathcal D} (H)$:
\bee
H\phi_0 = e_0 \phi_0 , \ \phi_0 \in L^2 (\R^d)   \, . \label {Eq3}
\eee
\end {hypothesis}

Hereafter we can assume, for simplicity's sake and without loss in generality, that the unperturbed eigenvector $\phi_0$ is normalized to one, i.e.:
\be
\| \phi_0 \|_{L^2} = 1\,. 
\ee

In the following we denote 
\be
\Lambda =  \mbox {\rm dist} \left [ \sigma (H) \setminus \{ e_0 \} , e_0 \right ] > 0 \, . 
\ee

\begin {remark} \label {Nota0}
Since $\phi_0 \in {\mathcal D}(H)$ and the potential $V$ is bounded from below then it follows that $\phi_0 \in H^1$ because 
\be
\| \nabla \phi_0 \|_{L^2}^2 = \langle - \Delta \phi_0 , \phi_0 \rangle_{L^2} = e_0 \| \phi_0 \|_{L^2}^2 - \langle V \phi_0 , \phi_0 \rangle\le (e_0 - \Gamma) \| \phi_0 \|_{L^2}^2\, . 
\ee
Thus
\be
\phi_0 \in {\mathcal D}(H) \cap L^6 (\R^d) 
\ee
follows from this fact and from the Gagliardo-Nirenberg inequality  \cite {Cordero}
\bee
\| f \|_{L^p} \le C_{p,d} \| \nabla f \|^\rho \| f \|^{1-\rho } \, , \ \rho = \frac d2 - \frac dp \label {Eq25}
\eee
for some positive constant $C_{p,d}$ and where
\bee
p \in 
\left \{ 
\begin {array}{l} 
\, [2,+\infty ] \mbox { if }d=1;\\
\, [2,+\infty ) \mbox { if }d=2;\\
\, [2, 2d/(d-2) ] \mbox { if }d > 2. 
\end {array}
\right. \, . \label {Eq26}
\eee
\end {remark}

\subsection {Formal solutions}

We look for a \emph {formal} stationary solution to (\ref {Eq1}) close to the solution to the linear problem (\ref {Eq3}) by means of a formal power series
\bee
E :=E(\nu ) = \lim_{N\to + \infty} E_N (\nu )\ \mbox { and } \ \psi := \psi (x, \nu ) = \lim_{N\to + \infty} \psi_N (x,\nu )  \, , \label {Eq4}
\eee
where
\bee
E_N (\nu ) = \sum_{n=0}^N \nu^n e_n \ \mbox { and } \ \psi_N (x, \nu )= \sum_{n=0}^N \nu^n \phi_n (x)  \label {Eq5}
\eee
and where $e_n$ and $\phi_n$ are defined by induction as follows. \ In fact, $E$ and $\psi$ depend on the perturbative parameter $\nu$; sometimes, for simplicity, we will omit this dependence when this fact does not cause misunderstanding.

\begin {remark} \label {Rem1}
We should underline that the following formulas make sense provided that the vectors $u_n$ and $v_n$ below belongs to $L^2 (\R^d)$ and $\phi_n \in {\mathcal D}(H) \cap L^6 (\R^d)$; we'll discuss this point in Section \ref {Sec5}.
\end {remark}

Let $e_\ell$ and $\phi_\ell$ be defined for any $\ell =0,1,\ldots , n-1$, where $\langle \phi_0 , \phi_\ell \rangle_{L^2} =0$ for any $\ell =1,2,\dots , n-1$, and let 
\be
v_{n-1} = \sum_{m=0}^{n-1} \sum_{\ell=0}^{n-1-m} \phi_m \bar \phi_\ell \phi_{n-1-m-\ell} \, , \ u_n = \sum_{m=1}^{n-1} e_m \phi_{n-m}\, . 
\ee
We define 
\bee
e_n =  {\langle \phi_0 , v_{n-1} \rangle_{L^2}}
\label {Eq6}
\eee
and 
\be
\varphi_n = e_n \phi_0 + u_n - v_{n-1}\, . 
\ee 
By construction it follows that 
\be
\langle u_n , \phi_0 \rangle_{L^2} =0 \ \mbox { and } \ \langle \varphi_n , \phi_0 \rangle_{L^2} =0\, ,
\ee
that is $\varphi_n \in \Pi^\perp L^2$, where $\Pi^\perp = 1 - \Pi$ and $\Pi$ is the projection operator on the space spanned by $\phi_0$. \ Hence, the resolvent operator $[H-e_0 ]^{-1}$ is bounded on $\Pi^\perp L^2$ and we can define
\bee
\phi_n = [H-e_0 ]^{-1} \varphi_n = [H-e_0 ]^{-1} \Pi^\perp \varphi_n = \Pi^\perp [H-e_0 ]^{-1}  \varphi_n \in \Pi^\perp L^2 \, . \label {Eq7}
\eee

\begin {lemma}\label {Lemma1}
Let $e_n$ and $\phi_n\in \Pi^\perp L^2 $ be defined by induction for any $n \ge 1$ as in (\ref {Eq6}) and (\ref {Eq7}). \ Let $E_N$ and $\psi_N$ be defined as in (\ref {Eq5}). \ Let
\bee
r_N := H \psi_N + \nu |\psi_N |^2 \psi_N - E_N \psi_N \, , \label {Eq8}
\eee
Then $r_N$ is a power series in $\nu$ with finitely many terms where all the coefficients of the powers $\nu^n$, with $n \le N$, are exactly zero.
\end {lemma}

\begin {remark} \label {Rem2}
Since $e_0$ is a simple and isolated eigenvalue of the selfadjoint operator $H$ and since $\varphi_n \perp \Pi L^2$ for any $n \ge 1$ then:
\bee
\| \phi_n \|_{L^2} \le \frac {1}{\Lambda } \| \varphi_n \|_{L^2} \, . 
\label {Eq9}
\eee
\end {remark}

\begin {proof} 
By formally substituting (\ref {Eq5}) and ( \ref {Eq4}) in (\ref {Eq1}) we then have to check that
\bee
\sum_{n=0}^\infty \nu^n H\phi_n + \nu \sum_{n=0}^\infty \nu^n \phi_n \sum_{m=0}^\infty \nu^m \phi_m \sum_{\ell =0}^\infty \nu^\ell \bar \phi_\ell 
= \sum_{m=0}^\infty \nu^m e_m  \sum_{n=0}^\infty \nu^n \phi_n \label {Eq10}
\eee

This equation can be written as 
\be
\sum_{n=0}^\infty \nu^n H \phi_n + \sum_{n=0}^\infty \nu^{n+1} v_n = \sum_{n=0}^\infty \nu^n \left [ e_0 \phi_n + u_n + e_n \phi_0 \right ] 
\ee
where $u_n$ and $v_n$ are defined above. \ By equating the term with the same power of the perturbative parameter $\nu$ we have that
\bee
H\phi_0 = e_0 \phi_0 \, , \ \mbox { for } \ n =0 \, , \label {Eq11}
\eee
which is satisfied by assumption, and
\be
H \phi_n + v_{n-1} = e_n \phi_0 + e_0 \phi_n + u_n \, , \ \mbox { for } \  n \ge 1 \, . 
\ee
If we multiply both side by $\phi_0$ then
\be
\langle \phi_0 , H \phi_n \rangle_{L^2} + \langle \phi_0 , v_{n-1} \rangle_{L^2} = e_n \| \phi_0 \|^2_{L^2} + e_0 \langle \phi_0 , \phi_n \rangle_{L^2} + \langle \phi_0 , u_n \rangle_{L^2} 
\ee
from which it follows that
\be
e_n = \frac {\langle \phi_0 , v_{n-1} \rangle_{L^2} - \langle \phi_0 , u_n \rangle_{L^2}}{\| \phi_0 \|^2_{L^2}} = \frac {\langle \phi_0 , v_{n-1} \rangle_{L^2} }{\| \phi_0 \|^2_{L^2}} = {\langle \phi_0 , v_{n-1} \rangle_{L^2} }\, , 
\ee
since $\phi_0 \perp u_n$ and $\| \phi_0 \|_{L^2}=1$. \ If we denote now
\be
\varphi_n = e_n \phi_0 + u_n - v_{n-1}
\ee
then $\varphi_n \perp \phi_0$ and thus we get
\be
\phi_n = [H-e_0 ]^{-1} \varphi_n \, . 
\ee
\end {proof}

\subsection {Main result} Here we state our main result.

\begin {theorem} \label {Teo1}
Let $d=1,2,3$ and let Hypotheses 1-2 be satisfied. \ Then, there exists $\nu^\star >0$ such that for any $\nu$ such that $|\nu |< \nu^\star$ the nonlinear Schr\"odinger equation (\ref {Eq1}) admits a stationary solution $\psi ( x, \nu) \in {\mathcal D}(H) \cap L^6 (\R^d )$, associated to an energy $E (\nu )$, given by means of the strong-convergent power series
\bee
\psi (x, \nu)= \sum_{n=0}^{\infty} \nu^n \phi_n (x) \ \mbox { and } \ E (\nu )= \sum_{n=0}^{\infty} \nu^n e_n \, , \label {quattro}
\eee
where $\phi_n (x)$ and $e_n$ are given in Lemma \ref {Lemma1}.
\end {theorem}

\begin {remark} \label {NotaD}
It is worth noting that the stationary solution $\psi$ given by (\ref {quattro}) is not normalized to one, that is, to the value of the norm of the unperturbed eigenvector $\phi_0$, which is assumed, for convenience of argument, to be equal to $1$. \ In fact, a simple calculation gives that
\be
\| \psi \|_{L^2}^2 = \sum_{n =0}^\infty \nu^n r_n \ \mbox { where } \ r_n = \sum_{m=0}^n \langle \phi_{n-m}, \phi_m \rangle \, . 
\ee
In particular
\be
r_0 = \| \phi_0 \|_{L^2}^2 =1 \, 
\ee
\be
r_1 = \langle \phi_0, \phi_1 \rangle + \langle \phi_1, \phi_0 \rangle =0 \, , \mbox { since } \ \phi_1 \in \Pi^\perp L^2 \, ,
\ee
and
\be
r_2 = \langle \phi_0, \phi_2 \rangle +\langle \phi_1, \phi_1 \rangle+\langle \phi_2, \phi_0 \rangle = \| \phi_1 \|_{L^2}^2 >0 \, ,  \mbox { since } \, \  \phi_2 \in \Pi^\perp L^2 \, .
\ee
Thus 
\be
\| \psi \|^2_{L^2} = 1 + \nu^2 g(\nu ) 
\ee
where $g(\nu ) $ is the analytic function obtained by means of the perturbative procedure for $\nu$ in a neighborhood of $\nu =0$ and such that $g(0) >0$. \ If one looks for a normalized solution may act as follows. \ Let
\be
\tilde \psi = \frac {\psi}{\| \psi \|_{L^2}}
\ee
be the normalized stationary solution to the equation
\be
H \tilde \psi + \tilde \nu |\tilde \psi |^2 \tilde \psi = E \tilde \psi
\ee
where $E$ is still given by (\ref {quattro}) and where
\bee
\tilde \nu := \tilde \nu (\nu )= \nu \| \psi \|^2_{L^2} = \nu \left [1 + \nu^2 g(\nu ) \right ] \, . \label {cinque}
\eee
Such a relation is invertible with inverse function
\be
\nu := \nu (\tilde \nu )\, .
\ee
In conclusion, if one look for the normalized solution to the equation
\bee
H \psi + \nu | \psi |^2  \psi = E  \psi \label {sei}
\eee
for a given value of the parameter $\nu$ let $\nu^\star$ be such that $\tilde \nu (\nu^\star )= \nu$, let $\psi$ and $E$ be the perturbative solutions given by (\ref {quattro}) corresponding to such a value of $\nu^\star$; then $\psi /\| \psi \|_{L^2}$ and $E$ are the normalized solution to (\ref {sei}).

In addition, by means of the scaling $\psi = \nu^{-2} \omega$ then (\ref {Eq1}) takes the form of the $\nu$-normalized equation 
\bee
H\omega +|\omega |^2 \omega = E \omega \label {EqII}
\eee
where we have just seen that
\be
\| \omega \|_{L^2}^2 = \nu \| \psi \|_{L^2}^2 = \nu \left [1+\nu g(\nu )\right ] \, . 
\ee
Thus, for $\nu$ in a neighborhood of $0$, we can find a continuous curve $(E(\nu ), \| \omega \|_{L^2} )$, near the point $(e_0 , 0)$, for solution to (\ref {EqII}). \  Recall that the analysis of the slope of this curve is important in the stability analysis of the
stationary state (see, e.g., the "slope condition" in \cite {Gr}).
\end {remark}

\section {$L^p$ estimates} \label {Sec5}

As anticipated in Remark \ref {Rem1} it turns out that formulas (\ref {Eq6}) and (\ref {Eq7}) make sense provided that $v_n$ and $\varphi_n$ belongs to $L^2$. \ Hence, we have to prove that $\phi_n$ belongs to $L^2 \cap L^6$ for any $n$. \ In order to obtain a $L^p$-norm estimate of the vectors $\phi_n$ we make use of the Gagliardo-Nirenberg inequality (\ref {Eq25}). 

\begin {lemma} \label {Lemma2}
Let $V(x)$ be a potential bounded from below (\ref {Eq2}); let $p$ and $C_{p,d}$ as given in (\ref {Eq25}) and (\ref {Eq26}). \ Concerning the $H^1$ and $L^p$ norms of $\phi_n$ we have that 
\be
\| \phi_n \|_{H^1} \le \mu_1 \| \varphi_n \|_{L^2} 
\ee
and
\be
\| \phi_n \|_{L^p} \le \mu_2 (p,d) \| \varphi_n \|_{L^2}
\ee
for some constants
\bee
\mu_1 = \left [ \frac {1}{\Lambda^2} +  \frac {1}{\Lambda } + \frac {e_0 - \Gamma}{\Lambda^2} \right ]^{1/2} \ \mbox { and } \ \mu_2 (p,d) := C_{p,d}  \frac {1}{\Lambda^{1-\rho} } \left [ \frac {1}{\Lambda } + \frac {e_0 - \Gamma}{\Lambda^2} \right ]^{\rho/2} 
\label {Eq27}
\eee
independent of $n$.
\end {lemma}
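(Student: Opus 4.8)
The plan is to bound, one after the other and all in terms of $\|\varphi_n\|_{L^2}$, the three quantities that enter the right-hand sides: $\|\phi_n\|_{L^2}$, $\|\nabla\phi_n\|_{L^2}$, and finally $\|\phi_n\|_{L^p}$. For the $L^2$ norm I would simply invoke the estimate already recorded in Remark \ref{Rem2}, namely $\|\phi_n\|_{L^2}\le\Lambda^{-1}\|\varphi_n\|_{L^2}$, which holds because $\varphi_n$ is orthogonal to the eigenspace of $e_0$ and $e_0$ is isolated at distance $\Lambda$ from the remainder of $\sigma(H)$.

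The core step is the gradient estimate. Since $\phi_n=[H-e_0]^{-1}\varphi_n\in\mathcal{D}(H)$, it lies in the form domain of $H$, hence in $H^1(\R^d)$, and one may write $\|\nabla\phi_n\|_{L^2}^2=\langle\phi_n,(H-V)\phi_n\rangle=\langle\phi_n,H\phi_n\rangle-\langle\phi_n,V\phi_n\rangle$. Using $(H-e_0)\phi_n=\varphi_n$ gives $\langle\phi_n,H\phi_n\rangle=\langle\phi_n,\varphi_n\rangle+e_0\|\phi_n\|_{L^2}^2$, while the lower bound $V\ge\Gamma$ from Hypothesis \ref{Hyp1} yields $-\langle\phi_n,V\phi_n\rangle\le-\Gamma\|\phi_n\|_{L^2}^2$. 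Combining these, estimating $\langle\phi_n,\varphi_n\rangle\le\|\phi_n\|_{L^2}\|\varphi_n\|_{L^2}$ by Cauchy--Schwarz, and then inserting the Remark \ref{Rem2} bound for $\|\phi_n\|_{L^2}$, one obtains $\|\nabla\phi_n\|_{L^2}^2\le[\Lambda^{-1}+(e_0-\Gamma)\Lambda^{-2}]\|\varphi_n\|_{L^2}^2$. Adding the $L^2$ bound then gives $\|\phi_n\|_{H^1}^2\le[\Lambda^{-2}+\Lambda^{-1}+(e_0-\Gamma)\Lambda^{-2}]\|\varphi_n\|_{L^2}^2$, i.e. exactly the claimed constant $\mu_1$.

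For the $L^p$ estimate I would feed the two previous bounds into the Gagliardo--Nirenberg inequality (\ref{Eq25})--(\ref{Eq26}), which applies precisely because $\phi_n\in H^1(\R^d)$ and $p$, $\rho$ lie in the admissible range: $\|\phi_n\|_{L^p}\le C_{p,d}\|\nabla\phi_n\|_{L^2}^{\rho}\|\phi_n\|_{L^2}^{1-\rho}\le C_{p,d}\,[\Lambda^{-1}+(e_0-\Gamma)\Lambda^{-2}]^{\rho/2}\,\Lambda^{-(1-\rho)}\,\|\varphi_n\|_{L^2}$, which is $\mu_2(p,d)\|\varphi_n\|_{L^2}$ with $\mu_2$ as in (\ref{Eq27}). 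Note that both $\mu_1$ and $\mu_2$ depend only on $\Lambda$, $e_0$, $\Gamma$ (and on $p,d$), not on $n$, as required.

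The one point requiring care---and the main potential obstacle---is the justification of the quadratic-form identity $\|\nabla\phi_n\|_{L^2}^2=\langle\phi_n,H\phi_n\rangle-\langle\phi_n,V\phi_n\rangle$: one must know that $\phi_n$ belongs to the form domains of both $-\Delta$ and the multiplication operator by $V$, and that $Q(H)=Q(-\Delta)\cap Q(V-\Gamma)$ so that the splitting is legitimate and $\langle\phi_n,(V-\Gamma)\phi_n\rangle\ge 0$. This is standard for $H=-\Delta+V$ with $V$ piecewise continuous and bounded below---the form of $V-\Gamma$ is nonnegative and closed and its form sum with $-\Delta$ is the self-adjoint extension $H$---but it should be stated explicitly; everything else is an elementary chain of Cauchy--Schwarz estimates together with the Gagliardo--Nirenberg inequality.
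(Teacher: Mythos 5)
Your proof is correct and follows essentially the same route as the paper: the $L^2$ bound from Remark \ref{Rem2}, the gradient bound obtained by splitting $-\Delta = (H-e_0)-(V-e_0)$ (your $H-V$ rearrangement is the same identity) together with $V\ge\Gamma$ and Cauchy--Schwarz, and then the Gagliardo--Nirenberg inequality for the $L^p$ estimate, yielding exactly the constants $\mu_1$ and $\mu_2(p,d)$ of (\ref{Eq27}). Your closing remark on the form-domain justification is a welcome extra precision, but it does not change the argument.
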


\begin {proof} 
From Remark \ref {Rem2} we have (\ref {Eq9}). \ Then, we have to estimate
\be
\| \nabla \phi_n \|_{L^2} = \| \nabla [H-e_0]^{-1} \varphi_n \|_{L^2}\, .
\ee
Since $V(x)$ is bounded from below, $V \ge \Gamma$, then
\be
\| \nabla [H-e_0 ]^{-1} \varphi_n \|^2_{L^2} &=& \langle [H-e_0 ]^{-1} \varphi_n , - \Delta [H-e_0 ]^{-1} \varphi_n \rangle_{L^2} \\ 
&=& \langle [H-e_0 ]^{-1} \varphi_n ,  \varphi_n \rangle_{L^2} - \langle [H-e_0 ]^{-1} \varphi_n , (V-e_0) [H-e_0 ]^{-1} \varphi_n \rangle_{L^2} \\
& \le & \langle [H-e_0 ]^{-1} \varphi_n ,  \varphi_n \rangle_{L^2} + (e_0 - \Gamma) \|  [H-e_0 ]^{-1} \varphi_n \|^2_{L^2}
\ee
since $-\Delta = H-e_0 - (V-e_0)$. \ Hence 
\be
\| \nabla [H-e_0 ]^{-1} \varphi_n \|_{L^2} \le \left [ \frac {1}{\Lambda } + \frac {e_0 - \Gamma}{\Lambda^2} \right ]^{1/2} \| \varphi_n \|_{L^2}
\ee
Therefore, we can conclude that
\be
\| \phi_n \|_{L^p} &\le & C_{p,d} \| \nabla \phi_n \|^\rho_{L^2}   \|  \phi_n \|^{1-\rho}_{L^2}  \\
& \le & C_{p,d}  \frac {1}{\Lambda^{1-\rho} } \| \varphi_n \|_{L^2}^{1-\rho}  
\left [ \frac {1}{\Lambda } + \frac {e_0 - \Gamma}{\Lambda^2} \right ]^{\rho/2} \| \varphi_n \|_{L^2}^\rho \\
& \le & \mu_2 \| \varphi_n \|_{L^2}
\ee
where $\mu_2 (p,d) $ is the constant (\ref {Eq27}) dependent on $p$ and $d$ but independent of $n$.
\end {proof}

\begin {lemma} \label {Lemma3}
Let $V(x)$ be a potential bounded from below: $V \ge \Gamma $. \ Let $d=1,2,3$ and let $\phi_j \in {\mathcal D}(H) \cap L^6$ for $j=0,\ldots , n-1$; then $\phi_n \in {\mathcal D} (H) \cap L^6$.
\end {lemma}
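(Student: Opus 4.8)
The plan is to prove Lemma~\ref{Lemma3} by unravelling the recursive definitions of $\varphi_n$, $u_n$ and $v_{n-1}$ and showing that each of these vectors lies in $L^2(\R^d)$ under the inductive hypothesis; then Lemma~\ref{Lemma2} immediately upgrades $\phi_n = [H-e_0]^{-1}\varphi_n$ to a vector in $H^1 \cap L^p$, and a separate argument puts it in $\D(H)$. First I would treat $v_{n-1} = \sum_{m,\ell} \phi_m \bar\phi_\ell \phi_{n-1-m-\ell}$: each summand is a product of three functions with indices $\le n-1$, hence by the inductive hypothesis each factor lies in $L^6(\R^d)$, and by the generalized H\"older inequality with exponents $6,6,6$ the product lies in $L^2(\R^d)$ with $\|\phi_m\bar\phi_\ell\phi_k\|_{L^2} \le \|\phi_m\|_{L^6}\|\phi_\ell\|_{L^6}\|\phi_k\|_{L^6}$. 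Since the sum is finite, $v_{n-1}\in L^2$. The term $u_n = \sum_{m=1}^{n-1} e_m \phi_{n-m}$ is a finite linear combination of the $\phi_j$, $j\le n-1$, hence in $L^2$ (indeed in $\D(H)$), and the scalars $e_m$ are well-defined by \eqref{Eq6} because $v_{m-1}\in L^2$ and the inner product with $\phi_0$ makes sense. Consequently $\varphi_n = e_n\phi_0 + u_n - v_{n-1} \in L^2(\R^d)$, and by construction $\varphi_n \perp \phi_0$.

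Next I would invoke Lemma~\ref{Lemma2}: since $\varphi_n\in L^2$, the vector $\phi_n = [H-e_0]^{-1}\varphi_n$ satisfies $\|\phi_n\|_{H^1}\le\mu_1\|\varphi_n\|_{L^2}$ and $\|\phi_n\|_{L^p}\le\mu_2(p,d)\|\varphi_n\|_{L^2}$ for the admissible exponents $p$ in \eqref{Eq26}. For $d=1$ the range of $p$ is $[2,+\infty]$, so $p=6$ is allowed; for $d=2$ the range is $[2,+\infty)$, so again $p=6$ is allowed. Hence $\phi_n\in L^6(\R^d)$ in both cases, which is precisely the closure property needed to keep the induction running. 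I would remark here that this is exactly why the theorem is restricted to $d=1,2$: for $d\ge 3$ the Gagliardo--Nirenberg exponent cuts off at $2d/(d-2)<6$ and the cubic term $v_n$ would not land in $L^2$ via this argument.

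Finally, to show $\phi_n\in\D(H)$: since $e_0$ is in the resolvent set of $H$ restricted to the orthogonal complement of $\phi_0$ and $\varphi_n$ lies in that complement, $[H-e_0]^{-1}\varphi_n$ is by definition in the operator domain $\D(H)$; equivalently, $\phi_n$ solves $(H-e_0)\phi_n = \varphi_n$ with $\varphi_n\in L^2$, so $H\phi_n = \varphi_n + e_0\phi_n \in L^2$, giving $\phi_n\in\D(H)$. Strictly one should check that the reduced resolvent $[H-e_0]^{-1}$ (the inverse on $\{\phi_0\}^\perp$, extended by $0$ on $\mathrm{span}\,\phi_0$) indeed maps into $\D(H)$, which follows from $e_0$ being isolated and simple, as recorded in Remark~\ref{Rem2}. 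The base case $n=0$ is Hypothesis~\ref{Hyp3}, namely $\phi_0 = \psi_0 \in \D(H)\cap L^6(\R^d)$, so the induction is complete.

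The main obstacle I anticipate is purely the bookkeeping of the trilinear term: one must verify that every multi-index $(m,\ell,n-1-m-\ell)$ appearing in $v_{n-1}$ has all three entries in $\{0,\dots,n-1\}$ so that the inductive hypothesis applies to each factor — this is true since $m,\ell\ge 0$ and $m+\ell\le n-1$ force $0\le n-1-m-\ell\le n-1$ and likewise $m,\ell\le n-1$ — and then to track that the H\"older estimate only ever needs the $L^6$ norm, never a higher $L^p$ norm, so that the argument is self-consistent with the exponent constraint \eqref{Eq26}. No hard analysis is needed beyond Lemma~\ref{Lemma2}; the content is in choosing the right H\"older exponents $(6,6,6)\mapsto 2$ and matching them against the dimensional restriction.
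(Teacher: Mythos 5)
Your argument is correct and follows essentially the same route as the paper's proof: H\"older with exponents $(6,6,6)$ puts $v_{n-1}$ in $L^2$, $u_n$ is a finite combination of earlier $\phi_j$'s, hence $\varphi_n\in L^2$ with $\varphi_n\perp\phi_0$, so $\phi_n=[H-e_0]^{-1}\varphi_n\in{\mathcal D}(H)$, and Lemma~\ref{Lemma2} (Gagliardo--Nirenberg with $p=6$, admissible for $d=1,2$) gives $\phi_n\in L^6$. Your extra bookkeeping on the indices and the explicit remark on why $d\le 2$ is needed simply make explicit what the paper leaves implicit.
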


\begin {proof}
Indeed, $u_j \in L^2$, $j=0,\ldots , n$, by construction. \ Concerning $v_j$ we have that it belong to $L^2$ for any $j=0,\ldots , n-1$ from the H\"older inequality. \ Hence $E_n$ is well defined and $\varphi_j$ belongs to $L^2$ for any $j=1,\ldots , n$. \ From this fact and since $\varphi_j \perp \phi_0$,  $j=1,\ldots , n$, then $\phi_n = \left [ H- e_0 \right ]^{-1} \varphi_n \in {\mathcal D}(H)$.  Finally, by Lemma \ref {Lemma2} then $\phi_n \in L^6$ where we apply the Gagliardo-Nirenberg inequality (\ref {Eq25}) with  $p=6$.
\end {proof}

\begin {remark} \label {Rem5}
In fact, $\phi_j \in L^p$ for any $p \in [1,+\infty ]$ if $d=1$, $p \in [1,+\infty )$ if $d=2$ and $p \le 2d/(d-2)$ if $d>2$.
\end {remark}

\begin {remark} \label {Rem6}
From Lemma \ref {Lemma3} and from Remark \ref {Nota0} then we have proved that $\phi_n \in {\mathcal D}(H) \cap L^6$ for any $n=0,1,2 ,\ldots$, and $u_n$, $v_{n-1} \in L^2$ for any $n=1,2,\ldots $.
\end {remark}

\section {Are the formal series (\ref {Eq5}) convergent as $N$ goes to infinity? \ Proof of Theorem \ref {Teo1}}  \label {Sec6}

In order to prove the convergence of the perturbation series we give the following results.
\begin {lemma} \label {Lemma4}
Let 
\be
c_n := \| \phi_n \|_{L^2}\, , \ d_n := \| \phi_n \|_{L^6} \ \mbox { and } \ b_n := |e_n|\, ,\ n=0,1,2,\ldots \, , 
\ee
then {\bf for any $n \ge 1$ it follows that}
\be
b_n &\le & \sum_{m=0}^{n-1} d_m \sum_{\ell =0}^{n-1-m} d_\ell  d_{n-1-\ell-m} \\ 
c_n 
& \le & \frac {1}{\Lambda} \left [ 2 \sum_{m=0}^{n-1} d_m \sum_{\ell =0}^{n-1-m} d_\ell  d_{n-1-\ell-m} + \sum_{m=1}^{n-1} b_m c_{n-m} \right ] \\
d_n & \le & \mu_2 (6,d) \left [ 2 \sum_{m=0}^{n-1} d_m \sum_{\ell =0}^{n-1-m} d_\ell  d_{n-1-\ell-m} + \sum_{m=1}^{n-1} b_m c_{n-m} \right ]
\ee
\end {lemma}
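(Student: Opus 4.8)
The plan is to bound the three sequences $b_n$, $c_n$, $d_n$ one after another, reducing each of them to the $L^6$-norms $d_j$ together with quantities already under control, by means of Hölder's inequality and the resolvent estimates of Remark \ref{Rem2} and Lemma \ref{Lemma2}. Throughout I use the normalization $\|\phi_0\|_{L^2}=1$ fixed at the start of this section, and the fact, established in Lemma \ref{Lemma3} and Remark \ref{Rem6}, that all the $\phi_j$, $u_j$, $v_j$, $\varphi_j$ live in the spaces where the following manipulations are legitimate.

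\emph{Step 1: the bound on $b_n$.} By (\ref{Eq6}) and Cauchy--Schwarz, $b_n=|e_n|\le\|\phi_0\|_{L^2}\,\|v_{n-1}\|_{L^2}=\|v_{n-1}\|_{L^2}$. Writing out $v_{n-1}=\sum_{m=0}^{n-1}\sum_{\ell=0}^{n-1-m}\phi_m\bar\phi_\ell\phi_{n-1-m-\ell}$ and using the triangle inequality, it suffices to estimate each summand $\|\phi_m\bar\phi_\ell\phi_k\|_{L^2}$ with $k=n-1-m-\ell$. The key point is the three-factor Hölder inequality: since $\tfrac16+\tfrac16+\tfrac16=\tfrac12$, one has $\|\phi_m\bar\phi_\ell\phi_k\|_{L^2}\le\|\phi_m\|_{L^6}\|\phi_\ell\|_{L^6}\|\phi_k\|_{L^6}=d_m d_\ell d_k$, finite by Remark \ref{Rem6}. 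Summing gives the first inequality, and the same computation yields the auxiliary bound $\|v_{n-1}\|_{L^2}\le\sum_{m=0}^{n-1}d_m\sum_{\ell=0}^{n-1-m}d_\ell d_{n-1-m-\ell}$, which I reuse below.

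\emph{Step 2: the bounds on $c_n$ and $d_n$.} Here I start from $\phi_n=[H-e_0]^{-1}\varphi_n$ with $\varphi_n=e_n\phi_0+u_n-v_{n-1}$, and control $\|\varphi_n\|_{L^2}$. By the triangle inequality and $\|\phi_0\|_{L^2}=1$, $\|\varphi_n\|_{L^2}\le b_n+\|u_n\|_{L^2}+\|v_{n-1}\|_{L^2}$; expanding $u_n=\sum_{m=1}^{n-1}e_m\phi_{n-m}$ gives $\|u_n\|_{L^2}\le\sum_{m=1}^{n-1}b_m c_{n-m}$, and combining this with the two bounds from Step 1 (for $b_n$ and for $\|v_{n-1}\|_{L^2}$) yields $\|\varphi_n\|_{L^2}\le 2\sum_{m=0}^{n-1}d_m\sum_{\ell=0}^{n-1-m}d_\ell d_{n-1-m-\ell}+\sum_{m=1}^{n-1}b_m c_{n-m}$. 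It then remains to apply the resolvent estimate $\|\phi_n\|_{L^2}\le\Lambda^{-1}\|\varphi_n\|_{L^2}$ of Remark \ref{Rem2} to obtain the bound on $c_n$, and the estimate $\|\phi_n\|_{L^6}\le\mu_2(6,d)\|\varphi_n\|_{L^2}$ of Lemma \ref{Lemma2} to obtain the bound on $d_n$.

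No step is genuinely hard: the lemma is essentially bookkeeping once the structural facts of Section \ref{Sec5} are in place. The only points that deserve care are that every object being manipulated is finite and the operations are legitimate — the triple products $\phi_m\bar\phi_\ell\phi_k$ must lie in $L^2$, hence $\varphi_n\in L^2$, hence $\phi_n\in{\mathcal D}(H)\cap L^6$ — all of which hold inductively by Lemma \ref{Lemma3} and Remark \ref{Rem6}; and that Lemma \ref{Lemma2} is used with $p=6$, which is exactly why the statement is restricted to $d=1,2$, since $6$ must belong to the admissible range (\ref{Eq26}).
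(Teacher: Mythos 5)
Your proof is correct and follows essentially the same route as the paper's: Cauchy--Schwarz for $b_n\le\|v_{n-1}\|_{L^2}$, the three-factor H\"older inequality (with $\tfrac16+\tfrac16+\tfrac16=\tfrac12$) for $\|v_{n-1}\|_{L^2}$, the triangle inequality giving $\|\varphi_n\|_{L^2}\le 2\|v_{n-1}\|_{L^2}+\|u_n\|_{L^2}$ with $\|u_n\|_{L^2}\le\sum_{m=1}^{n-1}b_mc_{n-m}$, and then the resolvent bound of Remark \ref{Rem2} for $c_n$ and the estimate of Lemma \ref{Lemma2} with $p=6$ for $d_n$. Your added remarks on finiteness via Lemma \ref{Lemma3} and Remark \ref{Rem6} and on the restriction $d=1,2$ are consistent with the paper and require no changes.
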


\begin {proof} In order to prove the result above we remark that
\be
b_n = |e_n | \le \| v_{n-1} \|_{L^2} \, , \ 
c_n = \| \phi_n \|_{L^2} \le  \frac {1}{\Lambda}\| \varphi_n \|_{L^2}  
\ee
and 
\be 
d_n = \| \phi_n \|_{L^6} \le  \mu_2 (6,d) \| \varphi_n \|_{L^2} \, ,
\ee
from Lemma \ref {Lemma2}, where
\bee
\| \varphi_n \|_{L^2} \le  |e_n|  + \| v_{n-1} \|_{L^2} + \| u_n \|_{L^2} \le 2 \| v_{n-1} \|_{L^2} + \| u_n \|_{L^2} \, . \label {tre}
\eee
Hence, the above result follows since
\bee
\| v_{n-1}\|_{L^2} &\le & \sum_{m=0}^{n-1} \sum_{\ell =0}^{n-1-m} \| \phi_m \|_{L^6} \| \phi_\ell \|_{L^6} \| \phi_{n-1-\ell-m} \|_{L^6} \nonumber \\ 
& = & \sum_{m=0}^{n-1} \sum_{\ell =0}^{n-1-m} d_\ell d_m d_{n-1-\ell-m} \label {uno}
\eee
and
\bee 
\| u_n\|_{L^2}  \le  \sum_{m=1}^{n-1} b_m c_{n-m} \, . \label {due}
\eee
\end {proof}

\begin {lemma} \label {Lemma5}
Let us assume that
\bee
d_j \le \delta e^{\alpha j} \frac {1}{(j+1)^2} \, , \ j=0, \ldots , n-1 \, , \label {Eq28}
\eee
for some $\alpha >0$ and where
\be
\delta = d_0 = \| \phi_0 \|_{L^6} 
\, . 
\ee 
Then 
\bee
  \sum_{m=0}^{n-1} d_m \sum_{\ell =0}^{n-1-m} d_\ell  d_{n-1-\ell-m} \le C_1 \delta^3 e^{\alpha (n-1)} \frac {1}{(n+1)^2}
 \eee
 for any $n \ge 1$ and some $C_1 \le 4\cdot 4.7^2$.
\end {lemma}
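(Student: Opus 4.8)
My plan is first to rewrite the left-hand side as a genuine triple convolution. Setting $i=m$, $j=\ell$ and $k=n-1-m-\ell$, the double sum equals $\sum_{i+j+k=n-1}d_i d_j d_k$, the indices $i,j,k$ running over the nonnegative integers of sum $n-1$. Each of them is $\le n-1$, so hypothesis (\ref{Eq28}) controls all three factors; since $e^{\alpha i}e^{\alpha j}e^{\alpha k}=e^{\alpha(n-1)}$ we get
\[
\sum_{m=0}^{n-1}d_m\sum_{\ell=0}^{n-1-m}d_\ell d_{n-1-\ell-m}\ \le\ \delta^{3}e^{\alpha(n-1)}\,\sigma_n,\qquad \sigma_n:=\sum_{\substack{i+j+k=n-1\\ i,j,k\ge0}}\frac{1}{(i+1)^{2}(j+1)^{2}(k+1)^{2}}.
\]
Everything then reduces to the purely arithmetic estimate $\sigma_n\le C_1/(n+1)^2$ with $C_1\le 4\cdot 4.7^{2}$.

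For that I would exploit that $k\mapsto(k+1)^{-2}$ is summable, so that its convolutions remain of order $(n+1)^{-2}$. The key remark is that $(i+1)+(j+1)+(k+1)=n+2$, whence the largest of these three integers is $\ge(n+2)/3$. Splitting $\sigma_n$ according to which of the three indices realises this lower bound, using the invariance of the summand under permutations to merge the three pieces, and bounding the reciprocal square of the distinguished factor by $9/(n+2)^2$, one obtains
\[
\sigma_n\ \le\ \frac{27}{(n+2)^{2}}\sum_{m=0}^{n-1}\ \sum_{i+j=m}\frac{1}{(i+1)^{2}(j+1)^{2}}\ \le\ \frac{27}{(n+2)^{2}}\Big(\sum_{k\ge0}\frac{1}{(k+1)^{2}}\Big)^{2}=\frac{27\,\zeta(2)^{2}}{(n+2)^{2}},
\]
where in the last step the range $0\le m\le n-1$ is enlarged to $m\ge0$, which decouples the remaining sums. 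Since $(n+2)^2\ge(n+1)^2$ and $\zeta(2)=\pi^2/6<1.65$, the right-hand side is at most $C_1/(n+1)^2$ with $C_1=27\zeta(2)^2<4\cdot4.7^2$; this chain is valid for every $n\ge1$, because for $n\ge1$ the index $k=n-1$ already satisfies $k+1\ge(n+2)/3$, so none of the partial sums is empty, and $\delta^3=1$ since $\delta=d_0=\|\phi_0\|_{L^2}=1$.

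The delicate point is the combinatorial bookkeeping in the second display. One has to check that the case split over which of $i+1,j+1,k+1$ meets the bound $(n+2)/3$ produces exactly the factor $3$: each admissible triple is counted once for each index attaining the bound, a multiplicity at least $1$, and summing these multiplicities against the permutation-symmetric summand gives three identical copies of the single-index sum. One also has to verify that, after extracting $9/(n+2)^2$, the leftover double sum may be freely majorised by the full product $\big(\sum_{k\ge0}(k+1)^{-2}\big)^2$. A sharper (but qualitatively identical) bound follows by evaluating the inner convolution exactly through $\tfrac1{(i+1)(j+1)}=\tfrac1{m+2}\big(\tfrac1{i+1}+\tfrac1{j+1}\big)$, which gives $\sum_{i+j=m}(i+1)^{-2}(j+1)^{-2}\le(2\zeta(2)+2)/(m+2)^2$ and hence $\sigma_n\le(2\zeta(2)+2)^2/(n+1)^2$. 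In any case the bound $4\cdot4.7^2$ holds with room to spare, and the residual factor $e^{-\alpha}$ in $e^{\alpha(n-1)}=e^{-\alpha}e^{\alpha n}$ is what will ultimately let the induction on $n$ close once $\alpha$ is taken large enough.
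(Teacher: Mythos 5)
Your argument is correct, and it reaches the stated bound by a route genuinely different from the paper's. The paper treats the triple sum as two nested two-fold convolutions: it first bounds the inner sum by $\tfrac{2}{(n-m)^2}+J(n-1-m)\le \tfrac{4.7}{(n-m)^2}$ using the appendix estimate $J(n)\le \tfrac{2.7}{(n+1)^2}$ (an integral comparison), then repeats the same estimate on the outer sum to get $I\le \tfrac{4.7^2}{n^2}\le \tfrac{4\cdot 4.7^2}{(n+1)^2}$. You instead symmetrize the full triple convolution $\sum_{i+j+k=n-1}(i+1)^{-2}(j+1)^{-2}(k+1)^{-2}$, pigeonhole on the largest of $i+1,j+1,k+1$ (which is at least $(n+2)/3$ since they sum to $n+2$), extract $9/(n+2)^2$ at the cost of a factor $3$ from the case split, and majorize what remains by $\zeta(2)^2$; the multiplicity bookkeeping you flag is handled correctly, since each triple is counted with multiplicity at least one and the summand is permutation-symmetric. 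This yields $C_1=27\zeta(2)^2\approx 73.1<4\cdot 4.7^2\approx 88.4$, so the conclusion of the lemma holds with room to spare. What each approach buys: yours is self-contained (it never needs the appendix lemma on $J(n)$), extends verbatim to convolutions of any fixed order, and gives a marginally better constant; the paper's iterated-convolution route reuses the same $J(n)$ estimate that is needed anyway in Lemma \ref{Lemma6}, so within the paper it costs nothing extra, and numerically (Remark \ref{Rem7}) the true constant is about $10.45$, well below both analytic bounds. Your closing remark about the sharper constant via the identity $\tfrac{1}{(i+1)(j+1)}=\tfrac{1}{m+2}\bigl(\tfrac{1}{i+1}+\tfrac{1}{j+1}\bigr)$ is a dispensable aside and does not affect the validity of the main chain.
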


\begin {remark}
By construction, (\ref {Eq28}) holds true for $j=0$.
\end {remark}

\begin {proof}
Indeed, from (\ref {Eq28}) it turns out that 
\be
 \sum_{m=0}^{n-1} d_m \sum_{\ell =0}^{n-1-m} d_\ell  d_{n-1-\ell-m} \le \delta^3 e^{\alpha (n-1)} I 
 \ee
 where we set
 \bee
 I:= \sum_{m=0}^{n-1} \frac {1}{(m+1)^2} \sum_{\ell =0}^{n-1-m} 
\frac {1}{(\ell +1)^2 (n-m-\ell )^2}\, . \label {Eq30}
\eee
A simple estimate proves that 
\be
\sum_{\ell =0}^{n-1-m} \frac {1}{(\ell +1)^2 (n-m-\ell )^2}  =  \frac {2}{(n-m)^2} + J(n-1-m)  \le \frac {4.7}{(n-m)^2}
\ee
where $J(n)$ has been defined and estimated in Appendix \ref {AppA}. \ Therefore, 
\be
I  &\le & \sum_{m=0}^{n-1} \frac {4.7}{(n-m)^2 (m+1)^2} \\
&\le & \frac {2\cdot 4.7}{n^2} + 4.7 J(n-1)  \le \frac {4.7^2}{n^2} \le \frac {4.7^2}{(n+1)^2}\frac {(n+1)^2}{n^2} \le \frac {4 \cdot 4.7^2}{(n+1)^2}
\ee
from which the statement follows.
\end {proof}

\begin {remark} \label {NotaA}
From Lemma \ref {Lemma5} and from (\ref {uno}) it follows that
\be
\| v_{n-1} \|_{L^2} \le C_1 \delta^3 e^{\alpha (n-1)} \frac {1}{(n+1)^2} \, .
\ee
\end {remark}

\begin {lemma} \label {Lemma6}
Let us assume that
\bee
b_j \le \beta e^{\alpha (j-1)} \frac {1}{(j+1)^2} \, , \ j=1, \ldots , n-1 \, , \label {Eq31}
\eee
and
\bee
c_j \le \gamma e^{\alpha j} \frac {1}{(j+1)^2} \, , \ j=1, \ldots , n-1 \, , \label {Eq32}
\eee
for some $\beta \ge 4 b_1 = 4 |e_1|$ and where
\be
\gamma =\max [ 4c_1 ,1] \, , \ c_1 = \| \phi_1 \|_{L^2}
\ee
and where $\alpha >0$ has been introduced in Lemma \ref {Lemma5}. \ Then 
\bee
 \sum_{m=1}^{n-1} b_m c_{n-m}  \le C_2 \beta \gamma  e^{\alpha (n-1)} \frac {1}{(n+1)^2}\, ,
 \eee
 for some $C_2 \le 2.7$.
 \end {lemma}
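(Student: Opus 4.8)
The plan is to feed the two standing hypotheses (\ref{Eq31})--(\ref{Eq32}) directly into the convolution sum and then to recognise the resulting numerical series as the quantity $J$ already estimated in Appendix~\ref{AppA}, exactly as in the proof of Lemma~\ref{Lemma5}.

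First I would bound each term: for $m=1,\dots,n-1$ we have $b_m\le\beta e^{\alpha(m-1)}(m+1)^{-2}$ by (\ref{Eq31}) and $c_{n-m}\le\gamma e^{\alpha(n-m)}(n-m+1)^{-2}$ by (\ref{Eq32}) --- note that these hypotheses are assumed precisely for the indices $1,\dots,n-1$, which are exactly the values taken by $m$ and by $n-m$ in the sum, so nothing falls outside their range (the case $n=1$ being trivial, the sum then being empty). Multiplying, and using that the $m$-dependence of the exponentials collapses, $e^{\alpha(m-1)}e^{\alpha(n-m)}=e^{\alpha(n-1)}$, one obtains
\[
\sum_{m=1}^{n-1} b_m c_{n-m}\ \le\ \beta\gamma\, e^{\alpha(n-1)}\sum_{m=1}^{n-1}\frac{1}{(m+1)^2(n-m+1)^2}\,.
\]

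Next I would treat the purely numerical sum $S_n:=\sum_{m=1}^{n-1}(m+1)^{-2}(n-m+1)^{-2}$. Restoring the two endpoint terms $m=0$ and $m=n$, each equal to $(n+1)^{-2}$, one has $\sum_{m=0}^{n}(m+1)^{-2}(n-m+1)^{-2}=2(n+1)^{-2}+J(n)$ in the notation of Appendix~\ref{AppA}, so that $S_n=J(n)$, and the estimate of $J$ proved there gives $J(n)\le 2.7\,(n+1)^{-2}$. Substituting this into the displayed inequality yields
\[
\sum_{m=1}^{n-1} b_m c_{n-m}\ \le\ C_2\,\beta\gamma\, e^{\alpha(n-1)}\frac{1}{(n+1)^2}\,,\qquad C_2\le 2.7\,,
\]
which is the assertion. (Should one prefer a self-contained estimate instead of quoting Appendix~\ref{AppA}, the elementary splitting $\frac{1}{(m+1)^2(n+1-m)^2}\le\frac{2}{(n+2)^2}\bigl(\frac{1}{(m+1)^2}+\frac{1}{(n+1-m)^2}\bigr)$, together with $\sum_{k\ge 2}k^{-2}=\pi^2/6-1$ and $(n+2)^{-2}\le(n+1)^{-2}$, gives $S_n\le 4(\pi^2/6-1)(n+1)^{-2}$, again comfortably below $2.7\,(n+1)^{-2}$.)

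I do not expect any genuine obstacle here: the argument is essentially bookkeeping. The only points needing care are matching the ranges of validity of (\ref{Eq31})--(\ref{Eq32}) with the summation range, checking that the exponential prefactor factors out cleanly, and being content to weaken $(n+2)^{-2}$ to $(n+1)^{-2}$ at the very end. All the substantive work has already been absorbed into the convolution estimate of Appendix~\ref{AppA}.
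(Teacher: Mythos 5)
Your proof is correct and follows essentially the paper's own route: insert (\ref{Eq31})--(\ref{Eq32}), factor out $e^{\alpha(n-1)}$, and recognize the remaining numerical sum as $J(n)$ from Appendix~\ref{AppA}, bounded by $2.7/(n+1)^2$. The endpoint-restoration step is superfluous, since your $S_n$ is exactly $J(n)$ by its definition, and the self-contained alternative bound is a harmless extra.
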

 
\begin {remark}
By construction, (\ref {Eq31}) and (\ref {Eq32}) hold true for $j=1$.
\end {remark}
 
 \begin {proof}
The proof immediately follows since
\be
\sum_{m=1}^{n-1} b_m c_{n-m} \le \beta \gamma e^{\alpha (n-1)} J (n)
\ee
where $J(n) \le \frac {2.7}{(n+1)^2}$ (see Appendix \ref {AppA}). \  \end {proof}

\begin {remark} \label {Rem7}
In fact, the estimate of the constants $C_1$ and $C_2$ are far to be optimal. \ Numerical analysis suggests that a sharp estimate for the term $I$ defined in (\ref {Eq30}) has the form
\be
I = \frac {g(n)}{(n+1)^2} \ \mbox { where } \ g(n) \le g(10) = 10.44589874\, ,
\ee
that is
\be
C_1 \le 10.45\, . 
\ee
Concerning $C_2$ from Appendix \ref {AppA} numerical analysis proves that
\be
C_2 \le 1.52 \, .
\ee
\end {remark}

\begin {remark} \label {NotaB}
From Lemma \ref {Lemma6} and from (\ref {due}) it follows that 
\be
\| u_n \|_{L^2} \le C_2 \beta \gamma e^{\alpha (n-1)} \frac {1}{(n+1)^2} \, . 
\ee
\end {remark}

\begin {remark} \label {NotaC}
From Remarks \ref {NotaA} and \ref {NotaB} and from (\ref {tre}) it follows that
\be
\| \varphi_n \|_{L^2} \le C_3 e^{\alpha (n-1)} \frac {1}{(n+1)^2} \, , 
\ee
where
\be
C_3  = \left [ 2 C_1 \delta^3 + C_2 \beta \gamma \right ]\, .
\ee
\end {remark}

Collecting Lemma \ref {Lemma4}, Lemma \ref {Lemma5} and Lemma \ref {Lemma6}, we have that
\bee
b_n & \le & C_1 \delta^3 e^{\alpha (n-1)} \frac {1}{(n+1)^2}\label {Eq34} \\ 
c_n 
 &\le & 
 \gamma \frac {e^{\alpha n}}{(n+1)^2} \frac {e^{-\alpha}}{\gamma \Lambda}C_3
 \label {Eq35} \\
d_n 
&\le & \delta \frac { e^{\alpha n}}{(n+1)^2} \frac { {\mu_2 (6,d) e^{-\alpha}} C_3 }{\delta } \label {Eq36} 
\eee
if (\ref {Eq28}), (\ref {Eq31}) and (\ref {Eq32}) hold true. 

In particular, if we choose 
\be
\beta = \max \left [ 4 b_1, C_1 \delta^3 \right ] 
\ee
%
and $\alpha >0$ large enough such that
\bee
\frac {e^{-\alpha}}{\gamma \Lambda}C_3 \le 1 \ \mbox { and } \  \frac { {\mu_2 (6,d) e^{-\alpha}} C_3 }{\delta }\le 1 \label {Eq24Bis}
\eee
then we have that (\ref {Eq28}), (\ref {Eq31}) and (\ref {Eq32}) hold true for $j=n$, too.

In conclusion, we have proved that
\begin {lemma} \label {Lemma7}
There exists four positive constants $\alpha >0$ large enough, $\beta >0$, $\gamma >0$ and $\delta >0$ independent of $n$ such that the following estimates 
\be
b_n \le \beta e^{\alpha (n-1)} \frac {1}{(n+1)^2} \, , \\ 
c_n \le \gamma e^{\alpha n} \frac {1}{(n+1)^2} \, ,  \\
d_n \le \delta  e^{\alpha n} \frac {1}{(n+1)^2} \, , 
\ee
hold true for any $n=1,2,\ldots $.
\end {lemma}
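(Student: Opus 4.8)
The plan is to establish the three inequalities \emph{simultaneously, by induction on} $n$, once the four constants have been introduced in the correct (acyclic) order. First I would fix $\delta := \max\left( \| \phi_0 \|_{L^6} , 1 \right)$, so that $\delta \ge 1$ and the hypothesis $d_0 \le \delta$ of Lemma \ref{Lemma5} holds — enlarging $\delta$ beyond $d_0$ only weakens that lemma's hypothesis, hence is harmless. Next I would set $\gamma := \max\left( 4 c_1 , 1 \right)$ and $\beta := \max\left( 4 b_1 , C_1 \delta^3 \right)$; then $\gamma \ge 1$, $\beta \ge 4 b_1$ (so the hypothesis of Lemma \ref{Lemma6} is met), and $c_1 \le \gamma/4$, $b_1 \le \beta/4$. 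The key point is that $\delta, \gamma, \beta$ are now frozen finite numbers depending neither on $n$ nor on $\alpha$. Finally I would choose $\alpha > 0$ so large that both inequalities in (\ref{Eq24Bis}) hold and, in addition, $d_1 \le \delta e^{\alpha}/4$: this is possible because the left-hand sides of (\ref{Eq24Bis}) each equal $e^{-\alpha}$ times a constant depending only on $\Lambda$, $\mu_2(6,d)$, $C_1$, $C_2$, $\delta$, $\gamma$, $\beta$, hence tend to $0$ as $\alpha \to + \infty$, and because $d_1$ is a fixed finite number by Lemma \ref{Lemma3}.

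With these choices the base case $n=1$ is immediate: $d_0 \le \delta$ (the $j=0$ instance), $b_1 \le \beta/4 = \beta e^{0}/(1+1)^2$, $c_1 \le \gamma/4 \le \gamma e^{\alpha}/(1+1)^2$, and $d_1 \le \delta e^{\alpha}/4$ by the choice of $\alpha$. Alternatively, the case $n=1$ can be absorbed into the general inductive step below, since the convolution sums of Lemma \ref{Lemma4} then reduce to the single term $d_0^3$.

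For the inductive step, fix $n \ge 2$ and assume the three displayed bounds of Lemma \ref{Lemma7} hold for all $j = 1, \ldots , n-1$, together with $d_0 \le \delta$. Then the hypotheses (\ref{Eq28}) of Lemma \ref{Lemma5} and (\ref{Eq31})--(\ref{Eq32}) of Lemma \ref{Lemma6} are in force for $j \le n-1$, so those two lemmas apply. Substituting their conclusions into the three inequalities of Lemma \ref{Lemma4} gives precisely (\ref{Eq34}), (\ref{Eq35}) and (\ref{Eq36}). From (\ref{Eq34}) and $\beta \ge C_1 \delta^3$ one gets $b_n \le \beta e^{\alpha (n-1)}/(n+1)^2$, the desired bound for $j = n$. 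In (\ref{Eq35}) and (\ref{Eq36}) I would factor out $\gamma$, respectively $\delta$, from the right-hand side; using $\gamma \ge 1$ and $\delta \ge 1$, the remaining bracketed prefactors are bounded above by the two quantities in (\ref{Eq24Bis}), which are $\le 1$ by the choice of $\alpha$. Hence $c_n \le \gamma e^{\alpha n}/(n+1)^2$ and $d_n \le \delta e^{\alpha n}/(n+1)^2$, closing the induction.

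I expect the only real point requiring care — a point of logic rather than of analysis — to be the order in which $\alpha, \beta, \gamma, \delta$ are introduced: the apparent circular dependence in the statement ($\beta$ and $\gamma$ are defined through $b_1, c_1, \delta$ but not through $\alpha$, while $\alpha$ is chosen last, in terms of $\beta, \gamma, \delta$) is resolved by this sequencing, and it is essential that the single factor $e^{-\alpha}$ be strong enough to absorb \emph{all} the accumulated constants $C_1, C_2, \mu_2(6,d), 1/\Lambda$ at once, which is exactly what the computation preceding Lemma \ref{Lemma7} checks. A secondary, purely bookkeeping, subtlety is the mismatch between the normalization $\| \phi_0 \|_{L^2} = 1$ used for $b_n$ and $c_n$ and the appearance of the $L^6$-quantity $d_0$ in Lemma \ref{Lemma5}; taking $\delta = \max\left( \| \phi_0 \|_{L^6} , 1 \right)$ as above disposes of it. Everything else is the estimate already packaged in Lemmas \ref{Lemma4}--\ref{Lemma6}.
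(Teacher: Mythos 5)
Your proposal is correct and follows essentially the same route as the paper: it combines Lemmas \ref{Lemma4}--\ref{Lemma6} into the bounds (\ref{Eq34})--(\ref{Eq36}), fixes $\beta=\max[4b_1,C_1\delta^3]$, uses $\gamma,\delta\ge 1$, and chooses $\alpha$ via (\ref{Eq24Bis}) to close the induction. Your only additions are to make the induction (base case $n=1$ and the ordering of the constants) explicit and to repair the harmless normalization slip $\delta=d_0=\|\phi_0\|_{L^2}$ versus $d_0=\|\phi_0\|_{L^6}$, which the paper leaves implicit.
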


\begin {remark} \label {Rem8}
From Remark \ref {NotaC} and from Lemma \ref {Lemma2} it follows that $\psi_N$ is norm convergent in $H^1$.
\end {remark}

Finally:

\begin {theorem} \label {Teo2}
Let $d=1,2,3$ and let $\nu$ be such that $|\nu | < e^{-\alpha}$ where $\alpha >0$ is large enough as given in Lemma \ref {Lemma7}. \ Then the power series $E_N$ is absolutely convergent, and the power series $\psi_N$ is norm convergent in $L^2$ and $L^6$, and the power series $H\psi_N = \sum_{n=0}^{N} \nu^n H \phi_n$ is norm convergent in $L^2$
.
\end {theorem}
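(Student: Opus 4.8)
The plan is to convert the $n$-uniform bounds of Lemma~\ref{Lemma7} and Remark~\ref{Rem8} into a one-line geometric-series estimate. Write $q := |\nu|\, e^{\alpha}$; the hypothesis $|\nu| < e^{-\alpha}$ is exactly the statement $q < 1$. Since $e^{\alpha n}/(n+1)^2 \le e^{\alpha n}$, each of the three sequences $(b_n)$, $(c_n)$, $(d_n)$ is dominated by $\mathrm{const}\cdot e^{\alpha n}$ with an $n$-independent constant, so after multiplying by $|\nu|^n$ they are dominated by $\mathrm{const}\cdot q^n$, which is the term of a convergent series.

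First I would handle $E_N$. By Lemma~\ref{Lemma7}, $\sum_{n\ge 1} |\nu^n e_n| \le \beta e^{-\alpha}\sum_{n\ge1} q^n (n+1)^{-2} \le \beta e^{-\alpha}/(1-q) < \infty$, so, adding the $n=0$ term, the scalar series $\sum_n \nu^n e_n$ converges absolutely and $E_N$ converges. Next, for $\psi_N$ in $L^2$: for $M<N$ the triangle inequality and Lemma~\ref{Lemma7} give $\|\psi_N-\psi_M\|_{L^2} \le \sum_{n=M+1}^N |\nu|^n c_n \le \gamma\sum_{n=M+1}^N q^n$, a tail of a convergent geometric series; since $L^2(\R^d)$ is complete, $(\psi_N)$ converges in $L^2$. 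Verbatim the same argument with $d_n$ in place of $c_n$ and completeness of the Banach space $L^6(\R^d)$ gives $L^6$-norm convergence of $\psi_N$ --- and this is the only place where the restriction $d=1,2$ is really used, since it is what puts $p=6$ inside the admissible range $(\ref{Eq26})$ of the Gagliardo--Nirenberg inequality behind the bound for $d_n$.

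Finally, for $H\psi_N = \sum_{n=0}^N \nu^n H\phi_n$ I would use that for $n\ge1$ one has $\phi_n = [H-e_0]^{-1}\varphi_n \in \mathcal{D}(H)$ (Remark~\ref{Rem6}), whence $H\phi_n = \varphi_n + e_0\phi_n$; by Remark~\ref{Rem8} this gives $\|H\phi_n\|_{L^2} \le \|\varphi_n\|_{L^2} + |e_0|\,c_n \le (\Lambda + |e_0|)\,\gamma\, e^{\alpha n}(n+1)^{-2}$, while $\|H\phi_0\|_{L^2}=|e_0|$. Multiplying by $|\nu|^n$ and summing yields once more a convergent geometric majorant, so $(H\psi_N)$ is Cauchy in $L^2$ and converges in norm. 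As a by-product, since $\psi_N\to\psi$ in $L^2$ and $H$ is closed, one also gets $\psi\in\mathcal{D}(H)$ with $H\psi = \lim_N H\psi_N$, which is the input that then feeds the proof of Theorem~\ref{Teo1}.

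There is no genuine obstacle: Theorem~\ref{Teo2} is simply the harvest of the estimates accumulated in Lemmas~\ref{Lemma4}--\ref{Lemma7}. The only bookkeeping points to keep straight are that the constants $\beta,\gamma,\delta$ are $n$-independent so they factor out of the sums, and that it is precisely the assumption $|\nu|<e^{-\alpha}$ which collapses the exponential factor $e^{\alpha n}$ into a summable geometric factor $q^n$ with $q<1$.
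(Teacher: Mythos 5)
Your proposal is correct and follows essentially the same route as the paper: convergence of $E_N$ and of $\psi_N$ in $L^2$ and $L^6$ is read off from the geometric majorants supplied by Lemma \ref{Lemma7}, and the $L^2$-convergence of $\sum_n \nu^n H\phi_n$ is obtained, exactly as in the paper, by writing $H\phi_n = \varphi_n + e_0[H-e_0]^{-1}\varphi_n$ and invoking the bound on $\|\varphi_n\|_{L^2}$ from Remark \ref{Rem8}. Your closing observation that closedness of $H$ then gives $\psi\in\mathcal{D}(H)$ with $H\psi=\lim_N H\psi_N$ is a harmless (and slightly cleaner) variant of the paper's subsequent resolvent argument leading to Lemma \ref{Lemma8}.
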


\begin {proof}
Convergence of $E_N$ and $\psi_N$ directly comes from Lemma \ref {Lemma7}. \ Concerning the convergence of $\sum_{n=0}^{N} \nu^n H \phi_n$ we simply remark that 
\be
\| H \phi_n \|_{L^2} &=& \| H [H-e_0 ]^{-1} \varphi_n \|_{L^2} \le \| \varphi_n \|_{L^2} + |e_0| \, \| [H-e_0]^{-1} \varphi_n \|_{L^2} \\ 
&\le & \left (1+|e_0| \Lambda^{-1} \right ) \| \varphi_n \|_{L^2} \le C_4 e^{\alpha n} \frac {1}{(n+1)^2}
\ee
for some $C_4 >0$, and thus the formal power series 
\be
\sum_{n} \nu^n H \phi_n 
\ee
is norm convergent in the space $L^2$ if $|\nu | <e^{-\alpha}$.
\end {proof}

So far we have proved that there exists vectors $u \, , \ w\, , \ \varphi \in L^2$, $v \in L^6$ and $z \in H^1$ such that 
\be
\psi_N \to u \ \mbox { in } \ L^2 \\ 
\psi_N \to v \ \mbox { in } \ L^6 \\ 
H \psi_N \to w \ \mbox { in } \ L^2 \\ 
\psi_N \to z \ \mbox { in } \ H^1 \\
\sum_{n=1}^N \nu^n \varphi_n \to \varphi \ \mbox { in } \ L^2
\ee
as $N$ goes to $\infty$. \ First of all we remark that 
\be
u &=& \sum_{n=0}^{\infty} \nu^n \phi_n = \phi_0 + \sum_{n=1}^{\infty} \nu^n \phi_n = \phi_0 + \sum_{n=1}^{\infty} \nu^n [H-e_0]^{-1} \varphi_n \\
&=& \phi_0 + [H-e_0]^{-1} \varphi
\ee
since $[H-e_0]^{-1}$ is a bounded operator on the eigenspace orthogonal to $\phi_0$, and where the convergence of the infinite sum has to be intended in the space $L^2$. \ Hence
\be
u \in {\mathcal D}(H) \, .  
\ee
Furthermore, we immediately have that
\be
\| u-z \|_{L^2} &\le & \| u - \psi_N \|_{L^2} + \| z - \psi_N \|_{L^2} \\ 
&\le & \| u - \psi_N \|_{L^2} + \| z - \psi_N \|_{H^1} \to 0 \, ,
\ee
hence $u=z$. \ Similarly, from (\ref {Eq25}) for some $\rho \in [0,1]$ we have that
\be
\| v-z \|_{L^6} &\le & \| v - \psi_N \|_{L^6} + \| z - \psi_N \|_{L^6} \\ 
&\le & \| v - \psi_N \|_{L^6} + C_{6,d} \| z - \psi_N \|_{H^1}^\rho \| z - \psi_N \|_{L^2}^{(1-\rho)} \\ 
&\le & \| v - \psi_N \|_{L^6} + \| z - \psi_N \|_{H^1}^\rho \left [ \| z \|_{L^2} + \| \psi_N \|_{L^2} \right ]^{(1-\rho)}
\to 0 \, ,
\ee
hence $v=z$. \ In conclusion, there exists a vector $\psi \in {\mathcal D} (H)$ such that 
\be
\psi_N \to \psi \ \mbox { in } \ L^2\, , \ L^6 \ \mbox { and } \ H^1\, , 
\ee
and
\be
H \psi & = & [H-e_0] \psi + e_0 \psi = [H-e_0] \left ( \phi_0 + [H-e_0]^{-1} \varphi \right ) + e_0 \psi \\
&=& e_0 \psi + \varphi \\ 
H \psi_N & = & H \sum_{n=0}^N \nu^n \phi_n = H \phi_0 + \sum_{n=1}^N \nu^n H \phi_n = H \phi_0 + \sum_{n=1}^N \nu^n  H [H-e_0]^{-1} \varphi_n \\
&=& e_0 \sum_{n=0}^N \nu^n \phi_n + \sum_{n=1}^N \nu^n  \varphi_n \to e_0 \psi + \varphi = H \psi
\ee
as $N\to \infty$.

Thus we have proved the following result.

\begin {lemma} \label {Lemma8}
$\psi_N \to \psi \in {\mathcal D} (H)$ in $L^2$, $L^6$ and $ H^1$, and $H\psi_N \to H\psi$ in $L^2$. 
\end {lemma}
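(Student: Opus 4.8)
The plan is to collect the convergence statements already in hand and check that they all describe one and the same limit, which moreover lies in $\mathcal{D}(H)$. First I would record what Theorem \ref{Teo2} together with Remark \ref{Rem8} give us for $|\nu| < e^{-\alpha}$ (with $\alpha$ as in Lemma \ref{Lemma7}): the partial sums $\psi_N$ are Cauchy in $L^2$, in $L^6$ and in $H^1$, with limits $u$, $v$ and $z$ respectively; the truncated series $\sum_{n=1}^N \nu^n \varphi_n$ is Cauchy in $L^2$, with limit $\varphi$; and $H\psi_N$ is Cauchy in $L^2$, with limit $w$.

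The central step is to identify $u$ and to place it in $\mathcal{D}(H)$. Since $\varphi_n \perp \phi_0$ for every $n\ge 1$ (Remark \ref{Rem2}) and the restriction of $[H-e_0]^{-1}$ to $\{\phi_0\}^\perp$ is bounded by $\Lambda^{-1}$, the uniform bound $\|\varphi_n\|_{L^2} \le \Lambda\gamma\, e^{\alpha n}(n+1)^{-2}$ from Remark \ref{Rem8} lets me apply $[H-e_0]^{-1}$ term by term, so that $u = \phi_0 + [H-e_0]^{-1}\varphi$ and in particular $u \in \mathcal{D}(H)$. Next I would reconcile the three topologies by the triangle inequality and by Gagliardo-Nirenberg: $\|u-z\|_{L^2} \le \|u-\psi_N\|_{L^2} + \|z-\psi_N\|_{H^1} \to 0$ forces $u=z$, and, using (\ref{Eq25}) with $p=6$, $\|v-z\|_{L^6} \le \|v-\psi_N\|_{L^6} + C_{6,d}\|z-\psi_N\|_{H^1}^{\rho}\|z-\psi_N\|_{L^2}^{1-\rho} \to 0$ forces $v=z$. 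Setting $\psi := u = v = z$ then gives $\psi \in \mathcal{D}(H)$ and $\psi_N \to \psi$ in $L^2$, $L^6$ and $H^1$.

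It then remains to show $w = H\psi$. I would compute both sides independently: from the representation of $\psi$, $H\psi = [H-e_0]\psi + e_0\psi = \varphi + e_0\psi$; while $H\psi_N = H\phi_0 + \sum_{n=1}^N \nu^n H[H-e_0]^{-1}\varphi_n = e_0\psi_N + \sum_{n=1}^N \nu^n \varphi_n$, which converges in $L^2$ to $e_0\psi + \varphi = H\psi$. Hence $H\psi_N \to H\psi$ in $L^2$, which completes the argument.

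The only point that needs genuine care --- rather than a bare triangle inequality --- is the domain membership of $\psi$: one must justify interchanging $[H-e_0]^{-1}$ with the infinite sum $\sum_{n\ge 1}\nu^n\varphi_n$. This is exactly where the geometric-type decay of $\|\varphi_n\|_{L^2}$ and the boundedness of the resolvent on $\{\phi_0\}^\perp$ enter, so I do not expect any further obstacle.
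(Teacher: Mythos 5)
Your proposal is correct and follows essentially the same route as the paper: identify the $L^2$-limit as $u=\phi_0+[H-e_0]^{-1}\varphi$ using the boundedness of the resolvent on $\{\phi_0\}^\perp$ (hence $u\in\mathcal{D}(H)$), match the $L^2$, $L^6$ and $H^1$ limits by triangle inequalities plus Gagliardo--Nirenberg, and verify $H\psi_N=e_0\psi_N+\sum_{n=1}^N\nu^n\varphi_n\to e_0\psi+\varphi=H\psi$. No gaps worth noting.
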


Finally, it's not hard to see that $\psi$ is a stationary solution associated to the energy $E$ to (\ref {Eq1}). \ Indeed:

\begin {lemma} \label {Lemma9}
Let 
\be
r_N := H \psi_N + \nu |\psi_N |^2 \psi_N - E_N \psi_N 
\ee
then 
\bee
\lim_{N\to \infty} \| r_N \|_{L^2} =0 \, . \label {Eq38} 
\eee
\end {lemma}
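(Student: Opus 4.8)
The plan is to extract the size of $r_N$ directly from Lemma \ref{Lemma1}. Recall that $r_N$ is a polynomial in $\nu$ (a power series with finitely many terms) of degree at most $3N+1$, and that Lemma \ref{Lemma1} already guarantees that its coefficient in front of $\nu^n$ vanishes for every $n\le N$. Hence $r_N=\sum_{k=N+1}^{3N+1}\nu^k c_k^{(N)}$, and the first step is to write these surviving coefficients out. Since $H\psi_N=\sum_{m=0}^N\nu^m H\phi_m$ carries no power higher than $\nu^N$, for $N+1\le k\le 3N+1$ one finds
\[
c_k^{(N)}=\sum_{\substack{m+\ell+j=k-1\\ 0\le m,\ell,j\le N}}\phi_m\bar\phi_\ell\phi_j\;-\;\sum_{\substack{n+m=k\\ 0\le n,m\le N}}e_n\phi_m ,
\]
where in the second sum the constraint $m=k-n\le N$ forces $n\ge k-N\ge 1$, so that no $e_0\phi_k$ term appears (and indeed the whole second sum is empty once $k>2N$). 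By Remark \ref{Rem6} and the H\"older inequality each $c_k^{(N)}$ belongs to $L^2$.

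The second step is to bound $\|c_k^{(N)}\|_{L^2}$ by an $N$-independent quantity, reusing the estimates of Section \ref{Sec6}. For the cubic part, $\|\phi_m\bar\phi_\ell\phi_j\|_{L^2}\le\|\phi_m\|_{L^6}\|\phi_\ell\|_{L^6}\|\phi_j\|_{L^6}=d_md_\ell d_j$, and since dropping the constraint $m,\ell,j\le N$ only enlarges the sum, Lemma \ref{Lemma5} (with the bound $d_j\le\delta e^{\alpha j}/(j+1)^2$ of Lemma \ref{Lemma7}) gives
\[
\Bigl\|\sum_{m+\ell+j=k-1}\phi_m\bar\phi_\ell\phi_j\Bigr\|_{L^2}\le C_1\delta^3\,\frac{e^{\alpha(k-1)}}{(k+1)^2}.
\]
For the quadratic part the triangle inequality yields $\|\sum_{n+m=k}e_n\phi_m\|_{L^2}\le\sum_{n=1}^{k-1}b_n c_{k-n}$, which Lemma \ref{Lemma6} (again with the bounds of Lemma \ref{Lemma7}) estimates by $C_2\beta\gamma\,e^{\alpha(k-1)}/(k+1)^2$. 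Collecting the two pieces, there is a constant $M$ — depending only on $\alpha,\beta,\gamma,\delta$ and the absolute constants $C_1,C_2$, but not on $k$ or $N$ — such that
\[
\|c_k^{(N)}\|_{L^2}\le M\,\frac{e^{\alpha k}}{(k+1)^2}\qquad\text{for all }k\ge N+1 .
\]

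The last step is to sum the geometric tail. Put $q:=|\nu|e^{\alpha}$; the hypothesis $|\nu|<e^{-\alpha}$ (the same range as in Theorem \ref{Teo2}) means $q<1$, and therefore
\[
\|r_N\|_{L^2}\le\sum_{k=N+1}^{3N+1}|\nu|^k\,\|c_k^{(N)}\|_{L^2}\le M\sum_{k=N+1}^{\infty}\frac{q^k}{(k+1)^2}\le M\,\frac{q^{N+1}}{1-q}\longrightarrow 0
\]
as $N\to\infty$, which is precisely (\ref{Eq38}).

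I expect the one genuinely delicate point to be the combinatorial bookkeeping of the first step: correctly pinning down which triple products $\phi_m\bar\phi_\ell\phi_j$ and double products $e_n\phi_m$ with indices capped at $N$ actually survive in each coefficient $c_k^{(N)}$, and then arguing that the resulting constant $M$ is truly $N$-independent. The latter is transparent once one observes that imposing the cap $m,\ell,j,n\le N$ can only shrink the sums already controlled in Lemmas \ref{Lemma5} and \ref{Lemma6}; everything else is a direct application of the $L^p$ estimates of Section \ref{Sec5} and the geometric-type bounds of Lemma \ref{Lemma7}. Combined with Lemma \ref{Lemma8}, this then identifies the limit $\psi$ as a stationary solution of (\ref{Eq1}) with energy $E$.
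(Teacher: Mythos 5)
Your proof is correct and follows essentially the same route as the paper: isolate the surviving coefficients of $\nu^k$ for $N+1\le k\le 3N+1$ (the lower ones vanish by Lemma \ref{Lemma1}), bound them uniformly in $N$ by the estimates of Lemmas \ref{Lemma5}--\ref{Lemma7}, and sum the resulting geometric tail for $|\nu|<e^{-\alpha}$. If anything you are slightly more careful than the paper, whose displayed identity writes the surviving coefficients directly as $v_{k-1}$ and $u_k$ even though for $k>N+1$ they are only the truncated convolutions; your remark that dropping the cap $m,\ell,j,n\le N$ can only enlarge the bounding sums handles this point cleanly.
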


\begin {proof}
A simple straightforward calculation gives that 
\be
r_N &=& \sum_{n=0}^N \nu^n H \phi_n + \sum_{n,m,\ell =0}^N \nu^{n+m+\ell +1} \phi_n \phi_m \bar \phi_\ell - \sum_{n,m=0}^N \nu^{n+m} e_m \phi_n \\ 
&=& \sum_{n=0}^N \nu^n H \phi_n + \sum_{n=1}^{3N+1} \nu^n \left [ \sum_{m=0}^{n-1} \sum_{\ell =0}^{n-1-m} \phi_m \bar \phi_{\ell} \phi_{n-m-\ell -1} \right ] - \sum_{n=0}^{2N} \nu^n \left [ \sum_{m=0}^n e_m \phi_{n-m} \right ] \\ 
&=& \sum_{n=N+1}^{3N+1} \nu^n v_{n-1} - \sum_{n=N+1}^{2N} \nu^n u_n
\ee
where the two power series $\sum_{n=0}^{\infty} \nu^n v_{n-1}$ and $\sum_{n=1}^{\infty} \nu^n u_{n}$ are norm-$L^2$ convergent for $\nu$ small enough. \ Then (\ref {Eq38}) follows.
\end {proof}

Now, we are ready to complete the proof of Theorem \ref {Teo1}. 

Indeed, let 
\be
r &:=& H \psi + \nu |\psi |^2 \psi - E \psi = a_N + b_N + c_N + r_N
\ee
where
\be
a_N &=& H \psi - H \psi_N \to 0 \ \mbox { in } L^2 \\ 
b_N &=& \nu \left [ |\psi |^2 \psi - |\psi_N |^2 \psi_N \right ]  \to 0 \ \mbox { in } L^2 \\ 
c_N &=& - \left [ E \psi - E_N \psi_N \right ]  \to 0 \ \mbox { in } L^2 \
\ee
From the above results immediately follows that 
\be
\| a_N\|_{L^2} \to 0 \ \mbox { and } \ \| c_N\|_{L^2} \to 0  
\ee
as $N$ goes to infinity. \ Concerning $b_N$ one notes that
\be
\| b_N \|_{L^2} &\le & \| (\psi - \psi_N ) |\psi |^2 \|_{L^2} + \| (\psi - \psi_N ) \bar \psi \psi_N \|_{L^2}  + \| (\bar \psi - \bar \psi_N ) |\psi_N |^2 \|_{L^2} \\
&\le & \| |\psi |^2 \|_{L^3} \| \psi - \psi_N \|_{L^6} + \| \psi - \psi_N \|_{L^6} \| \bar \psi \psi_N \|_{L^3} +  \| |\psi_N |^2 \|_{L^3} \| \psi - \psi_N \|_{L^6} \\
& \le & \left [ \| |\psi |^2 \|_{L^3} + \| \bar \psi \psi_N \|_{L^3} +  \| |\psi_N |^2 \|_{L^3} \right ] \| \psi - \psi_N \|_{L^6} \to 0 
\ee
as $N$ goes to infinity. \ From these facts and since (\ref {Eq38}) then Theorem \ref {Teo1} is proved. 

\begin {remark} \label {Nota}
Since the constants $\beta $, $\gamma$, $\delta$, $\Lambda$, $C_1$, $C_2$, $C_3$, $C_4$ and $\mu_2 (6,d)$ can be estimated then one can obtain the value of the parameter $\alpha$ solution to (\ref {Eq24Bis}). \ Hence, the estimate $\nu^\star < e^{-\alpha}$ of the radius of convergence follows.
\end {remark}

\section {A Toy model - infinite well potential} \label {Sec3} 
Let us consider, in dimension one, the infinite well potential of the form:
\be
V(x) = 
\left \{
\begin {array}{l} 
0 \ \mbox { if } |x| < \pi \\ 
+\infty \mbox { if } |x| \ge \pi 
\end {array}
\right. \, . 
\ee

\subsection {Linear time-independent Schr\"odinger Equation} The linear operator $H$ is formally defined as follows:
\be
H \psi = - \psi'' \, ,\ x \in (-\pi , + \pi ) \, , \ \psi \in L^2 ((-\pi , +\pi )) \, 
\ee
with Dirichlet boundary conditions
\bee
\psi (- \pi )= \psi (+\pi )=0  \, .\label {Eq12}
\eee
By means of a straightforward calculation it follows that the spectrum of $H$ is purely discrete and it is given by means of simple eigenvalues
\be
\lambda_j = \frac 14 j^2 \, , \ j=1,2, \ldots \, ,
\ee
with associated normalized eigenvectors
\be
q_j (x) = \frac {1}{\sqrt {\pi}}
\left \{
\begin {array}{l} 
\cos (j x/2) \, , \  \mbox { odd } j \\ 
\sin (j x/2) \, , \ \mbox { even } j
\end {array}
\right. \, . 
\ee
The resolvent operator is given by
\bee
\left ( [H-z ]^{-1} \psi \right ) (x) = \sum_{j=1}^\infty \frac {1}{\lambda_j - z} q_j (x) \langle q_j , \psi \rangle_{L_2} \, . \label {Eq13}
\eee

\subsection {Perturbation theory}

By making use of the perturbation formula we compute now the coefficients of the formal power series (\ref {Eq5}) where 
\be
e_0 = \lambda_1 = \frac 14
\ee
is the first unperturbed eigenvalue with associated unperturbed eigenvector 
\be
\phi_0 (x)= q_1 (x) = \frac {1}{\sqrt {\pi}} \cos (x/2)\, .
\ee

\begin {remark} \label {Rem3}
Here, we have considered, for argument's sake, the formal power series (\ref {Eq5}) associated to the first eigenvalue $\lambda_1$. \ Similarly, the same method may be applied to the unperturbed eigenvalues $\lambda_j$ for any $j>1$.
\end {remark}

The perturbation theory exploited in Lemma \ref {Lemma1} gives that
\be
v_0 = \phi_0^3 \, , \ u_1 =0 \, , \ 
e_1 = \frac {\| \phi_0^2 \|^2}{\| \phi_0 \|^2}= \frac {3}{4\pi}
\ee
and 
\be
\varphi_1 = e_1 \phi_0 - \phi_0^3 \, . 
\ee
Finally
\be
\phi_1 &=& 
\left ( [H-e_0 ]^{-1} \varphi_1 \right ) (x) = \sum_{j=2}^\infty \frac {1}{\lambda_j - e_0} q_j (x) \langle q_j , \varphi_1 \rangle_{L_2} \\ 
&=& -\frac {1}{\lambda_3 - e_0} q_3 (x) \langle q_3 , \phi_0^3 \rangle_{L_2} = -\frac {1}{8[\pi]^{3/2}} \cos \left ( \frac {3}{2} x \right )
\ee

By means of a straightforward calculation the other terms follow; for instance
\be
e_2 = -\frac {3}{32 \pi^2} \, ,\ e_3 = \frac {15}{256 \pi^3} \, , \ e_4 = -\frac {69}{2048 \pi^4} \\ 
e_5 = \frac {75}{4096 \pi^5} \, , \ e_6 = -\frac {1257}{131072 \pi^6} \, , \ e_7 = \ldots 
\ee
and
\be
\phi_2 (x) &=&  \frac {1}{64{\pi}^{5/2}} \left [ 3\cos \left ( \frac {3x}{2} \right ) +  \cos \left ( \frac {5x}{2} \right ) \right ] \\
\phi_3 (x) &=& - \frac {1}{512{\pi}^{7/2}} \left [ 9\cos \left ( \frac {3x}{2} \right ) +5  \cos \left ( \frac {5x}{2} \right ) +  
\cos \left ( \frac {7x}{2} \right ) \right ] \\
\phi_4 (x) &=& \frac {1}{4906{\pi}^{9/2}} \left [ 27\cos \left ( \frac {3x}{2} \right ) +20  \cos \left ( \frac {5x}{2} \right ) +
7  \cos \left ( \frac {7x}{2} \right ) +  \cos \left ( \frac {9x}{2} \right ) \right ] \\
\phi_5 (x) &=& - \frac {1}{32768{\pi}^{11/2}} \left [ 81\cos \left ( \frac {3x}{2} \right ) +75  \cos \left ( \frac {5x}{2} \right ) +
35  \cos \left ( \frac {7x}{2} \right )  + \right. \\ 
&& \left. \ \  + 9 \cos \left ( \frac {9x}{2} \right ) + \cos \left ( \frac {11x}{2} \right ) \right ] \\
\phi_6 (x) &=& \frac {1}{262144{\pi}^{13/2}} \left [ 243\cos \left ( \frac {3x}{2} \right ) +275  \cos \left ( \frac {5x}{2} \right ) +
154  \cos \left ( \frac {7x}{2} \right ) + \right. \\ 
&& \left. \ \ + 54 \cos \left ( \frac {9x}{2} \right ) +  11 \cos \left ( \frac {11x}{2} \right ) + \cos \left ( \frac {13x}{2} \right ) \right ] \\
\phi_7 (x) &=& \ldots 
\ee

\begin{table}
\begin{center}
\begin{tabular}{|c||c|c|c||c|c|c|} 
\hline
\multicolumn{1}{|c||}{} & \multicolumn{3}{|c||}{$\nu =0.1$} & \multicolumn{3}{|c|}{$\nu =1$}
\\ \hline
$N $    &  $E_N$          & $\| \psi_N \|_{L^2} $ &  $\| r_N \|_{L^2} $    &  $E_N$         & $\| \psi_N \|_{L^2} $  & $\| r_N \|_{L^2} $    \\ \hline \hline 
$0 $    &  $ 0.25 $       & $1$                   & $0.25 \cdot 10^{-1}$   &  $ 0.25 $       & $1$                    & $0.25 \cdot 10^{0}$    \\ \hline
$1 $    &  $ 0.273873 $   & $ 1.000007916 $                   & $0.16 \cdot 10^{-3}$   &  $ 0.488732 $   & $ 1.000791259 $                    & $0.16 \cdot 10^{-1}$   \\ \hline
$2 $    &  $ 0.273778 $   & $ 1.000007728 $                   & $0.30 \cdot 10^{-5}$   &  $ 0.479234 $   & $ 1.000614941 $                    & $0.28 \cdot 10^{-2}$   \\ \hline
$3 $    &  $ 0.273780 $   & $ 1.000007730 $                   & $0.52 \cdot 10^{-7}$   &  $ 0.481123 $   & $ 1.000634507 $                    & $0.48 \cdot 10^{-3}$   \\ \hline
$4 $    &  $ 0.273780 $   & $ 1.000007730 $                   & $0.88 \cdot 10^{-9}$   &  $ 0.480777 $   & $ 1.000632165 $                    & $0.81 \cdot 10^{-4}$   \\ \hline
$5 $    &  $ 0.273780 $   & $ 1.000007730 $                   & $0.15 \cdot 10^{-10}$  &  $ 0.480837 $   & $ 1.000632442 $                    & $0.13 \cdot 10^{-4}$   \\ \hline
$6 $    &  $ 0.273780 $   & $ 1.000007730 $                   & $0.24 \cdot 10^{-12}$  &  $ 0.480827 $   & $ 1.000632410 $                    & $0.22 \cdot 10^{-5}$   \\ \hline
\end{tabular}
\caption{Infinite well potential - Table of values corresponding to the case of defocusing nonlinearities when $\nu =0.1$ and $\nu=+1$.}
\label{tab1}
\end{center}
\end {table}
\begin{table}
\begin{center}
\begin{tabular}{|c||c|c|c||c|c|c|} 
\hline
\multicolumn{1}{|c||}{} & \multicolumn{3}{|c||}{$\nu =-0.1$} & \multicolumn{3}{|c|}{$\nu =-1$}
\\ \hline
$N $    &  $E_N$          & $\| \psi_N \|_{L^2} $ &  $\| r_N \|_{L^2} $    &  $E_N$         & $\| \psi_N \|_{L^2} $  & $\| r_N \|_{L^2} $    \\ \hline \hline 
$0 $    &  $ 0.25 $       & $1$                   & $0.25 \cdot 10^{-1}$   &  $ 0.25 $       & $1$                    & $0.25 \cdot 10^{0}$    \\ \hline
$1 $    &  $ 0.226168 $   & $ 1.000007916 $                   & $0.17 \cdot 10^{-3}$   &  $ 0.011268 $   & $ 1.000791259 $                    & $0.17 \cdot 10^{-1}$   \\ \hline
$2 $    &  $ 0.226032 $   & $ 1.000008160 $                   & $0.30 \cdot 10^{-5}$   &  $ 0.001769 $   & $ 1.000992585 $                    & $0.32 \cdot 10^{-2}$   \\ \hline
$3 $    &  $ 0.226030 $   & $ 1.000008108 $                   & $0.53 \cdot 10^{-7}$   &  $ -0.000121 $   & $ 1.001018592 $                    & $0.58 \cdot 10^{-3}$   \\ \hline
$4 $    &  $ 0.226030 $   & $ 1.000008108 $                   & $0.90 \cdot 10^{-9}$   &  $ -0.000467 $   & $ 1.001021668 $                    & $1.00 \cdot 10^{-4}$   \\ \hline
$5 $    &  $ 0.226030 $   & $ 1.000008108 $                   & $0.15 \cdot 10^{-10}$  &  $ -0.000527 $   & $ 1.001022048 $                    & $0.17 \cdot 10^{-4}$   \\ \hline
$6 $    &  $ 0.226030 $   & $ 1.000008108 $                   & $0.25 \cdot 10^{-12}$  &  $ -0.000537 $   & $ 1.001022093 $                    & $0.28 \cdot 10^{-5}$   \\ \hline
\end{tabular}
\caption{Infinite well potential - Table of values corresponding to the case of focusing nonlinearities when $\nu =-0.1$ and $\nu=-1$.}
\label{tab2}
\end{center}
\end {table}

In Tables \ref {tab1}-\ref {tab2} we compute the values of $E_N$, $\| \psi_N \|_{L^2}$ and 
$\| r_N \|_{L^2}$, where $r_N$ is the remainder term defined by (\ref {Eq8}), for different values of $N$ and for $\nu =\pm 0.1$ and $\nu = \pm1$. \ It turns out that the formal power series (\ref {Eq5}) rapidly converges and that the norm of the remainder term $r_N$ rapidly decreases when $N$ increases.
                               
\subsection {Nonlinear time-independent Schr\"odinger Equation}

The nonlinear time-independent equation (\ref {Eq1}) becomes
\bee
- \psi'' + \nu |\psi |^2 \psi = E \psi \, , \ x \in (-\pi , +\pi )\, , \label {Eq14}
\eee
with Dirichlet boundary conditions (\ref  {Eq12}). \ 
If we restrict our attention to the case of $\nu \in \R$ and $E\in \R$ then we known that the stationary solution is, up to a constant phase factor, a real-valued 
function. \ The proof of this result is quite similar to the one of Lemma 3.7 given by \cite {P}. \ Indeed, if we multiply both sides of (\ref {Eq14}) by $\bar \psi$, we obtain 
that
\be
- \psi'' \bar \psi + \nu |\psi |^4 = E |\psi |^2
\ee
and similarly 
\be
- \bar \psi'' \psi + \nu |\psi |^4 = E |\psi |^2 \, . 
\ee
Hence
\be
0 = \psi'' \bar \psi - \bar \psi'' \psi = (\psi' \bar \psi - \bar \psi' \psi )' 
\ee
from wich follows that 
\be 
\psi' \bar \psi - \bar \psi' \psi = C \, , \forall x \in (-\pi , +\pi ) \, , 
\ee
for some constant $C$. \ Recalling that $\psi (\pm \pi )=0$ then $C=0$ and thus $\theta = \arg (\psi )$ is a constant term.

Therefore, stationary solutions $\psi$ to (\ref {Eq14}) may be assumed to be real-valued and they satisfy to the equation
\bee
- \psi'' + \nu \psi^3 = E \psi \, , \ x \in (-\pi , +\pi )\, . \label {Eq15}
\eee

The general solution to such an equation has the form \cite {D}
\be
\psi (x) = \chi \mbox {sn} \left [ \zeta (x-x_0) ,k \right ] \, , \ \chi \in \R \, , 
\ee
where $x_0$ and $\zeta$ are arbitrary constants and where
\be
k^2 = \frac {E-\zeta^2}{\zeta^2} \ \mbox { and } \ \chi^2 =2 \frac {E-\zeta^2}{\nu}\, . 
\ee
The Dirichlet boundary conditions imply that $x_0 =-\pi$ and that $2\zeta \pi $ is a zero of the Jacobian Elliptic function $\mbox {sn}(x,k)$, i.e.: 
\be
2 \zeta  \pi = 2 K(k) m, \ m=1,2,\ldots \, , 
\ee
where $K(k)$ and $E(k)$ are the complete elliptic integral of first and second kind. \ The norm of the wavefunction $\psi$ is given by
\be
\| \psi \|_{L^2}^2 &=& \chi^2 \int_{-\pi}^{+\pi} \mbox {sn}^2 \left [ \zeta (x+\pi) ,k \right ] dx = 2m\frac {\chi^2}{\zeta} \frac {K(k)-E(k)}{k^2} \\
&=& 2\pi \chi^2  \frac {K(k)-E(k)}{K(k) k^2}\, . 
\ee
Hence
\be
\chi^2 = \frac {1}{2\pi} \frac {K(k) k^2}{K(k)-E(k)} \| \psi \|_{L^2}^2 \, . 
\ee

In order to find the stationary solutions to (\ref {Eq15}) the quantization conditions read
\be
2\frac {E-\zeta^2}{\nu}= \frac {1}{2\pi} \frac {K(k) k^2}{K(k)-E(k)} \| \psi \|_{L^2}^2 \, , \ k^2 = \frac {E-\zeta^2}{\zeta^2} 
\ \mbox { and } \ \zeta=\frac {K(k)}{\pi} m \, , \ m=1,2,\ldots \, . 
\ee

\subsubsection {Defocusing nonlinearity: $\nu >0$} 
When $\nu > 0$ then stationary solutions there exist provided that $E-\zeta^2 \ge 0$ and $k$ is a real-valued solution to the equation
\bee 
 K(k) [ K(k)-E(k) ] = \frac {\nu \pi}{4 m^2} \| \psi \|_{L^2}^2 \label {Eq16}
\, . 
\eee
If we remark that the function $K(k) [ K(k)-E(k) ]$ is a monotone increasing function such that 
\be
\lim_{k\to 0^+} K(k) [ K(k)-E(k) ] =0 \ \mbox { and } \ 
\lim_{k\to 1^-} K(k) [ K(k)-E(k) ] =+\infty 
\ee
then the equation above (\ref {Eq16}) has a unique solution $k_m \in (0,1)$, for any $m=1,2,\ldots $ fixed, and then there exists a family of values of the parameter $E$: 
\bee
E 
=   \left [ \frac {K(k_m)m}{\pi} \right ]^2 \left [1 + k_m^{2} \right ]\, ,\ m=1,2,\ldots \, . \label {Eq17}
\eee

\subsubsection {Focusing nonlinearity: $\nu <0$} 

On the other hands if $\nu <0$ then stationary solutions there exist provided that $E-\zeta^2 \le 0$ and $k=i \kappa$, $\kappa \in \R$, is a purely imaginary complex number; in such a case we recall that 
\be
\mbox {sn} (x,i\kappa )=k_1' \mbox {sd} (x\sqrt {1+\kappa^2},k_1) \ \mbox { where } \ \mbox {sd} (x,k_1) = \frac {\mbox {sn} (x,k_1)}{\mbox {dn} (x,k_1)}
\ee
and where
\be
k_1 = \kappa /\sqrt{1+\kappa^2} \mbox { and } k_1'= \sqrt {1-k_1^2}= \frac {1}{\sqrt {1+\kappa^2}}\, . 
\ee
Hence, equation (\ref {Eq16}) becomes 
\bee 
 K(i\kappa ) [ K(i\kappa )-E(i\kappa ) ] = \frac {\nu \pi}{4 m^2} \| \psi \|_{L^2}^2
\, . \label {Eq18}
\eee
If we remark that the function $K(i\kappa ) [ K(i\kappa )-E(i\kappa ) ]$ is a monotone real-valued decreasing function for $\kappa \in [0,+\infty )$ such that 
\be
\lim_{\kappa \to 0^+} K(i\kappa ) [ K(i\kappa )-E(i\kappa ) ] =0 \ \mbox { and } \ 
\lim_{\kappa \to +\infty} K(i\kappa ) [ K(i\kappa )-E(i\kappa ) ] =-\infty 
\ee
then the equation above (\ref {Eq18}) has a unique solution $\kappa_m \in (0,+\infty )$ for any $m=1,2,\ldots $ fixed, and also in this case there exists a family of values of the parameter $E$:
\bee
E 
=   \left [ \frac {K(i \kappa_m)m}{\pi} \right ]^2 \left [1 - \kappa_m^{2} \right ]\, , \ m=1,2,\ldots \, . \label {Eq19}
\eee

\subsection {Comparison between the perturbative result and the exact one}

From Table \ref {tab1} the perturbative result gives that the stationary solution to (\ref {Eq1}) for $\nu =0.1$ and $N=6$ has energy
\bee
E_6 =0.273780 \label {Eq20}
\eee
with associated wavefunction with norm
\bee
\| \psi \|_{L^2} \approx \| \psi_6 \|_{L^2} = 1.000007730\, . \label {Eq21}
\eee
The value of the solution $k$ to (\ref {Eq16}), where $m=1$ and where the value of $\| \psi \|_{L^2}$ is the one in (\ref {Eq21}), is given by
\be
k = 0.2474031338
\ee
and the associated energy $E$ is given by (\ref {Eq17})
\be
E= 0.273780
\ee
in full agreement with (\ref {Eq20}). \ Similarly, For $\nu =1$ and $N=6$ then Table \ref {tab1} gives that 
\bee
E_6  = 0.480827 \label {Eq22}
\eee
with associated wavefunction with norm
\be
\| \psi \|_{L^2} \approx \| \psi_6 \|_{L^2} = 1.000632410 \, . 
\ee
From (\ref {Eq16}) and (\ref {Eq17}) then (where $m=1$)
\be
k = 0.6682383718 \ \mbox { and } \ 
E= 0.480829
\ee
in good agreement with (\ref {Eq22}).

From Table \ref {tab2} the perturbative result gives that the stationary solution for $\nu =-0.1$ and $N=6$ has energy
\bee
E_6 =0.226030 \label {Eq23}
\eee
with associated wavefunction with norm
\be
\| \psi \|_{L^2} \approx \| \psi_6 \|_{L^2} = 1.000008108\, . 
\ee
From (\ref {Eq18}) and (\ref {Eq19}) then (where $m=1$)
\be
\kappa = 0.2574471610 \ \mbox { and } \ 
E= 0.226030
\ee
in full agreement with (\ref {Eq23}). \ Similarly, For $\nu =-1$ and $N=6$ then Table \ref {tab2} gives that 
\bee
E_6  = -0.000537 \label {Eq24}
\eee
with associated wavefunction with norm
\be
\| \psi \|_{L^2} \approx \| \psi_6 \|_{L^2} = 1.001022099 \, . 
\ee
From (\ref {Eq18}) and (\ref {Eq19}) then (where $m=1$)
\be
\kappa = 1.001546564 \ \mbox { and } \ 
E= -0.000540
\ee
in very good agreement with (\ref {Eq24}).

\section {Harmonic oscillator} \label {Sec4}
Let us consider, in dimension one, the harmonic oscillator with potential
\be
V(x) =  x^2 \, , \ x \in \R \, . 
\ee
That is, the linear operator $H$ is defined as follows:
\be
H \psi = - \psi'' + x^2 \psi \, ,\ x \in \R \, , \ \psi \in L^2 (\R ) \,  . 
\ee
It is well known that the spectrum of $H$ is purely discrete and it is given by simple eigenvalues
\be
\lambda_j = 2j- 1 \, , \ j =1,2 , \ldots \, , 
\ee
with associated normalized eigenvectors
\be
q_j (x) =\frac {1}{\sqrt {2^{j-1}(j-1)!}} \left ( \frac {1}{\pi} \right )^{1/4} e^{-  x^2 /2} H_{j-1} \left (  x \right )  \, ,
\ee
where 
\be
H_j (x) = (-1)^j e^{x^2} \frac {d^j}{dx^j} e^{-x^2} 
\ee
are the Hermite's polynomials. 

By making use of the perturbation formula we compute now the coefficients of the formal power series (\ref {Eq5}) associated to the first unperturbed eigenvalue 
\be
e_0 = \lambda_1 = 1
\ee
with associated unperturbed normalized eigenvector 
\be
\phi_0 (x)= q_1 (x) = \frac {1}{\sqrt[4] {\pi}} e^{-x^2 /2} \, .
\ee
In such a case the perturbative procedure gives that
\be
e_1 = \frac {\| \phi_0^2 \|_{L^2}^2}{\| \phi_0 \|_{L^2}^2} = \frac {1}{\sqrt {2\pi }}\, . 
\ee
Furthermore, 
\be
\phi_1 &=& \left [ H- e_0 \right ]^{-1} \varphi_1 = \sum_{j=2}^{+\infty } \frac {1}{\lambda_j - e_0} q_j (x) \langle q_j , \varphi_1 \rangle_{L^2} \\
&\approx & \sum_{j=2}^{N_2} \frac {1}{\lambda_j - e_0} q_j (x) \langle q_j , \varphi_1 \rangle_{L^2} \, , 
\ee
where $\varphi_1 = e_1 \phi_0 - \phi_0^3 $ and where the resolvent operator is given by a with infinitely many terms. \ In numerical calculation we truncate the series for $j$ up to a some  large enough positive integer $N_2$; in numerical experiments we observe that $N_2 = 60$ is a suitable value. \ Iterating such a procedure we can 
obtain in Tables \ref {tab3}-\ref {tab4} the numerical values of $E_N$, $\| \psi_N \|_{L^2}$ and $\| r_N \|_{L^2}$ for $N=1,2,\ldots , 6$, where $r_N$ is the remainder term defined by (\ref {Eq8}), for $\nu =\pm 0.1$ and $\nu =\pm 1$. \ As in the toy model discussed in Section \ref {Sec3} it turns out that the formal power series seems to rapidly converges for $|\nu |\le 1$.

\begin{table}
\begin{center}
\begin{tabular}{|c||c|c|c||c|c|c|} 
\hline
\multicolumn{1}{|c||}{} & \multicolumn{3}{|c||}{$\nu =+0.1$} & \multicolumn{3}{|c|}{$\nu =+1$}
\\ \hline
$N $    &  $E_N$             & $\| \psi_N \|_{L^2} $ &  $\| r_N \|_{L^2} $    &  $E_N$             & $\| \psi_N \|_{L^2} $  & $\| r_N \|_{L^2} $    \\ \hline \hline 
$0 $    &  $1 $              & $1$                   & $0.43 \cdot 10^{-1}$   &  $ 1    $          & $1$                    & $0.43 \cdot 10^{0}$    \\ \hline
$1 $    &  $ 1.039894228 $   & $ 1.000006539 $       & $0.24 \cdot 10^{-3}$   &  $ 1.398942280 $   & $ 1.000653715 $        & $0.23 \cdot 10^{-1}$   \\ \hline
$2 $    &  $ 1.039728699 $   & $ 1.000006483 $       & $0.25 \cdot 10^{-5}$   &  $ 1.382389419 $   & $ 1.000599885 $        & $0.24 \cdot 10^{-2}$   \\ \hline
$3 $    &  $ 1.039730376 $   & $ 1.000006484 $       & $0.24 \cdot 10^{-7}$   &  $ 1.384066368 $   & $ 1.000602159 $        & $0.23 \cdot 10^{-3}$   \\ \hline
$4 $    &  $ 1.039730361 $   & $ 1.000006484 $       & $0.22\cdot 10^{-9}$    &  $ 1.383909162 $   & $ 1.000602036 $        & $0.21 \cdot 10^{-4}$   \\ \hline
$5 $    &  $ 1.039730361 $   & $ 1.000006484 $       & $0.21 \cdot 10^{-11}$  &  $ 1.383923548 $   & $ 1.000602043 $        & $0.18 \cdot 10^{-5}$   \\ \hline
$6 $    &  $ 1.039730361 $   & $ 1.000006484 $       & $0.43 \cdot 10^{-12}$  &  $ 1.383922248 $   & $ 1.000602043 $        & $0.17 \cdot 10^{-6}$   \\ \hline
\end{tabular}
\caption{Harmonic oscillator potential - table of values corresponding to the case of defocusing nonlinearities when $\nu =0.1$ and $\nu=+1$.}
\label{tab3}
\end{center}
\end {table}
\begin{table}
\begin{center}
\begin{tabular}{|c||c|c|c||c|c|c|} 
\hline
\multicolumn{1}{|c||}{} & \multicolumn{3}{|c||}{$\nu =-0.1$} & \multicolumn{3}{|c|}{$\nu =-1$}
\\ \hline
$N $    &  $E_N$             & $\| \psi_N \|_{L^2} $ &  $\| r_N \|_{L^2} $    &  $E_N$             & $\| \psi_N \|_{L^2} $  & $\| r_N \|_{L^2} $    \\ \hline \hline 
$0 $    &  $1 $              & $1$                   & $0.43 \cdot 10^{-1}$   &  $ 1    $          & $1$                    & $0.43 \cdot 10^{0}$    \\ \hline
$1 $    &  $ 0.9601057720 $   & $ 1.000006539 $       & $0.24 \cdot 10^{-3}$   &  $ 0.6010577196 $   & $ 1.0006537148 $        & $0.25 \cdot 10^{-1}$   \\ \hline
$2 $    &  $ 0.9599402433 $   & $ 1.000006596 $       & $0.25 \cdot 10^{-5}$   &  $ 0.5845048581 $   & $ 1.0007119732 $        & $0.26 \cdot 10^{-2}$   \\ \hline
$3 $    &  $ 0.9599385664 $   & $ 1.000006596 $       & $0.24 \cdot 10^{-7}$   &  $ 0.5828279087 $   & $ 1.0007147627 $        & $0.25 \cdot 10^{-3}$   \\ \hline
$4 $    &  $ 0.9599385507 $   & $ 1.000006596 $       & $0.22\cdot 10^{-9}$    &  $ 0.5826707021 $   & $ 1.0007149164 $        & $0.24 \cdot 10^{-4}$   \\ \hline
$5 $    &  $ 0.9599385505 $   & $ 1.000006596 $       & $0.20 \cdot 10^{-11}$  &  $ 0.5826563161 $   & $ 1.0007149254 $        & $0.22 \cdot 10^{-5}$   \\ \hline
$6 $    &  $ 0.9599385505 $   & $ 1.000006596 $       & $0.28 \cdot 10^{-12}$  &  $ 0.5826550168 $   & $ 1.0007149260 $        & $0.20 \cdot 10^{-6}$   \\ \hline
\end{tabular}
\caption{Harmonic oscillator potential - table of values corresponding to the case of focusing nonlinearities when $\nu =-0.1$ and $\nu=-1$.}
\label{tab4}
\end{center}
\end {table}

\section {Conclusions} \label {Sec7}

In this paper we have applied the Rayleigh-Schr\"odinger perturbation theory when the unperturbed linear operator has an isolated nondegenerate eigenvalue and where the nonlinear term plays the role of the perturbation. \ The power series has coefficients that can be iteratively obtained and such a series is proved to be convergent when the strength $\nu$ of the nonlinear term has absolute value less than a threshold value $\nu^\star$, for some $\nu^\star >0$. 

From the numerical experiments resumed in Tables \ref {tab1}, \ref {tab2} and Tables \ref {tab3}, \ref {tab4} one has evidence  that the formal power series (\ref {Eq5}) rapidly converges for $|\nu|$ small enough when $N$ goes to infinity. \ In particular, from Figure \ref {fig1} one can see that $|e_n|$ and $\| \phi_n \|_{L^2}$ behaves like $C^n$ for some positive constant $C>0$ that can be numerically estimated, and then one can conclude that the power series (\ref {Eq5}) converges when $|\nu | < \nu^{\star} := C^{-1}$. 

For instance, concerning the convergence of the power series for $E_N$ in the model with an infinite well potential we observe that
\be
e_n =(-1)^n 4\frac {a_n}{\pi^n 8^{n}} 
\ee
where
\be
a_n \sim a^n 
\ee
for large $n$ and where
\be
a \le 4 \, . 
\ee
Thus, we expect that the power series $E_N$ is absolutely convergent for any $\nu$ such that $|\nu |< \nu^\star$ where $\nu^\star = \frac {8\pi }{a} \ge 2 \pi$.\ Similarly, a numerical estimate of the radius of convergence for the harmonic potential case could be obtained.

\begin{figure}[h]
\begin{center}
\includegraphics[height=6cm,width=6cm]{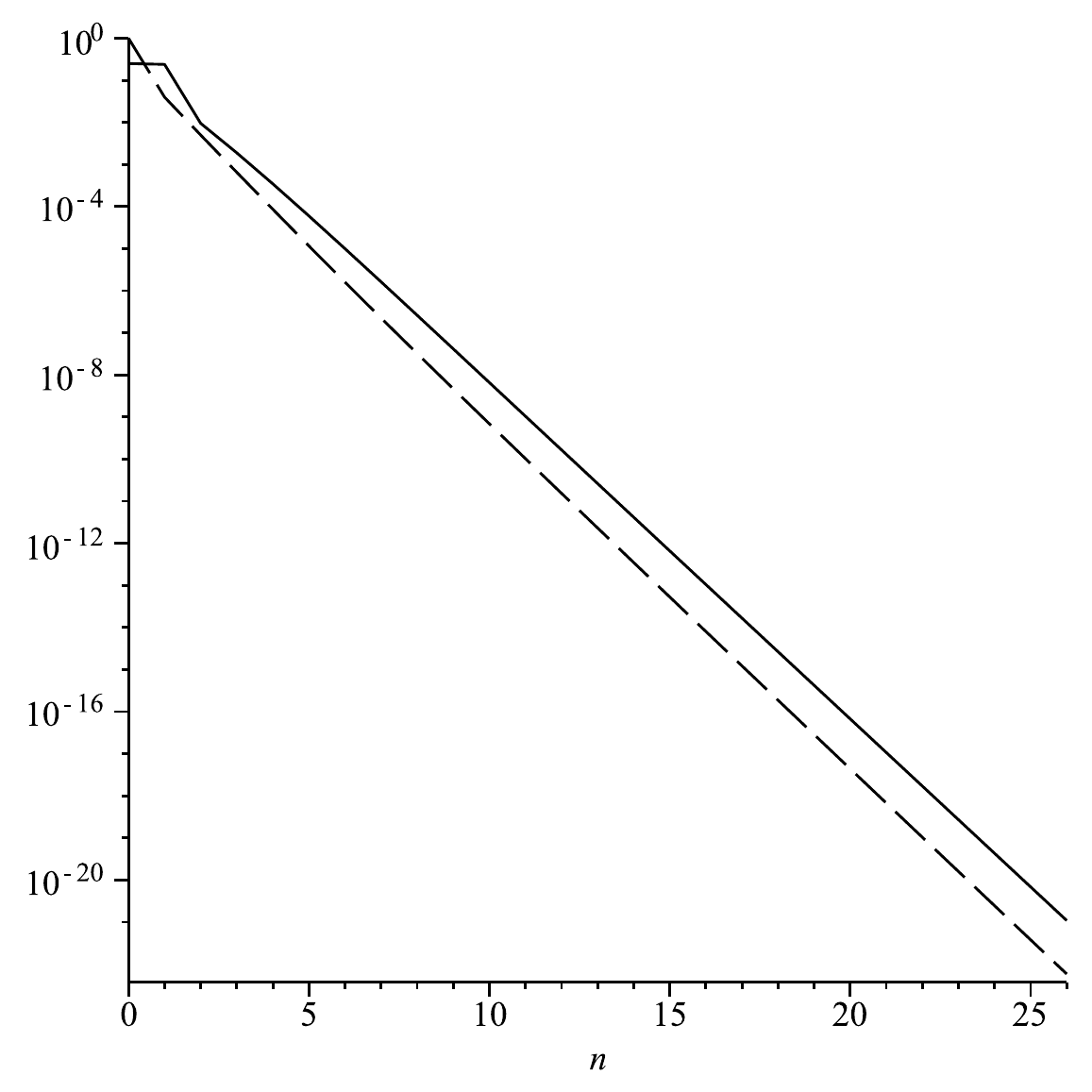}
\includegraphics[height=6cm,width=6cm]{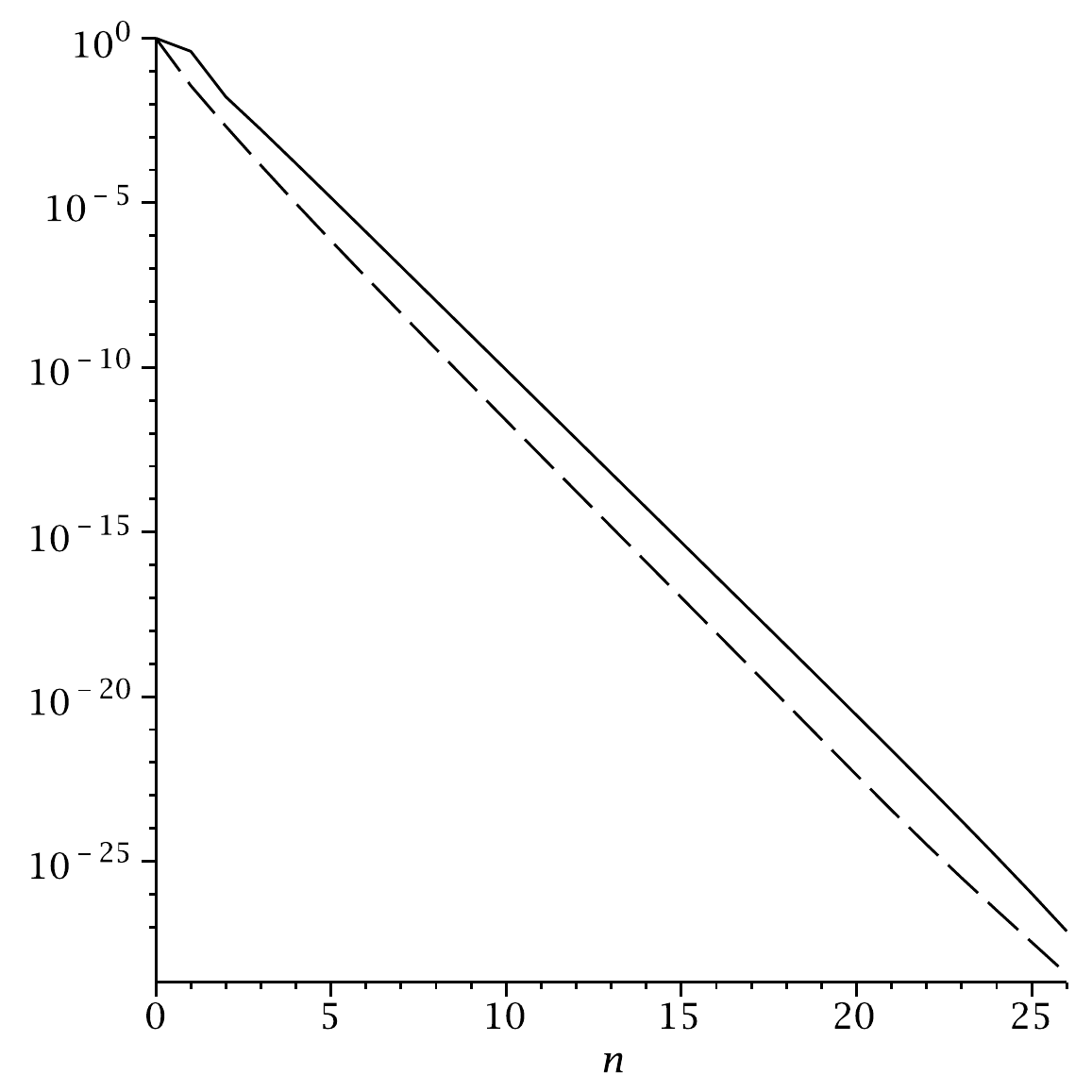}
\caption {In the two figures are respectively plotted the values of $|e_n|$ (full lines) and $\| \phi_n \|_{L^2} $ (broken lines) for the infinite well potential case (left hand side plot) and for the harmonic potential case (right hand side plot).}\label {fig1}
\end{center}
\end{figure}

\appendix

\section {A simple estimate} \label {AppA}

Let
\be
J := J(n)= \sum_{m=1}^{n-1} \frac {1}{(m+1)^2 (n-m+1)^2}\, , \ n >2 \, .
\ee
A simple inequality gives that
\be
J(n) \le 2 \int_1^{(n+1)/2} \frac {1}{x^2(n+1-x)^2} dx = 2\frac {n^2-1+2n\ln (n)}{(n+1)^3n} = \frac {f(n)}{(n+1)^2}
\ee
where
\be
f(n):= 2\frac {n^2-1+2n\ln (n)}{(n+1)n} \le 2.70 .
\ee
In fact, such an estimate is not optimal. \ A simple numerical experiment shows that 
\be
J(n) = \frac {g(n)}{(n+1)^2} \ \mbox { where }  g(n) \le g(19)= 1.517106786\, . 
\ee
Furthermore, a closed expression for $J(n)$ could be given by means of Polygamma functions; however, we don't dwell here on this detail.

\end{document}